\newtheorem{thm}{Theorem}
\newtheorem{lem}{Lemma}
\theoremstyle{definition}
\newcommand*{\QEDB}{\hfill\ensuremath{\square}}%
\renewenvironment{proof}[1][\proofname] {\par\pushQED{\qed}\normalfont\topsep6\p@\@plus6\p@\relax\trivlist\item[\hskip\labelsep\bfseries#1\@addpunct{.}]\ignorespaces}{\popQED\endtrivlist\@endpefalse}
\begin{document}
\title{Covert Communications on Renewal Packet Channels}

\author{
   \IEEEauthorblockN{Ramin Soltani\IEEEauthorrefmark{1},
            Dennis Goeckel\IEEEauthorrefmark{1}, Don Towsley\IEEEauthorrefmark{2}, and Amir Houmansadr\IEEEauthorrefmark{2}}

\IEEEauthorblockA{\IEEEauthorrefmark{1}Electrical~and~Computer~Engineering~Department,~University~of~Massachusetts,~Amherst,
    \{soltani, goeckel\}@ecs.umass.edu\\}
    \IEEEauthorblockA{\IEEEauthorrefmark{2}College of Information and Computer Sciences, University of Massachusetts, Amherst,
    \{towsley, amir\}@cs.umass.edu}
        
                       \thanks{ This work has been supported by the National Science Foundation under grants ECCS-1309573 and CNS-1525642.}
                       \thanks{ This work has been presented at the 54th Annual Allerton Conference on Communication, Control, and Computing,  October 2016.}
                       \thanks{Personal use of this material is permitted. Permission from IEEE must be obtained for all other uses, in any current or future media, including reprinting/republishing this material for advertising or promotional purposes, creating new collective works, for resale or redistribution to servers or lists, or reuse of any copyrighted component of this work in other works. DOI: \href{https://doi.org/10.1109/ALLERTON.2016.7852279}{10.1109/ALLERTON.2016.7852279}}

}

\date{}
\maketitle
\thispagestyle{plain}
\pagestyle{plain}

\newtheorem{definition}{Definition}

\begin{abstract}
Security and privacy are major concerns in modern communication networks. In recent years, the information theory of covert communications, where the very presence of the communication is undetectable to a watchful and determined adversary, has been of great interest. This emerging body of work has focused on additive white Gaussian noise (AWGN), discrete memoryless channels (DMCs), and optical channels. In contrast, our recent work introduced the information-theoretic limits for covert communications over packet channels whose packet timings are governed by a Poisson point process. However, actual network packet arrival times do not generally conform to the Poisson process assumption, and thus here we consider the extension of our work to timing channels characterized by more general renewal processes of rate $\lambda$. We consider two scenarios. In the first scenario, the source of the packets on the channel cannot be authenticated by Willie, and therefore Alice can insert packets into the channel. We show that if the total number of transmitted packets by Jack is $N$, Alice can covertly insert $\mathcal{O}\left(\sqrt{N}\right)$ packets and, if she transmits more, she will be detected by Willie. In the second scenario, packets are authenticated by Willie but we assume that Alice and Bob share a secret key; hence, Alice alters the timings of the packets according to a pre-shared codebook with Bob to send information to him over a $G/M/1$ queue with service rate $\mu>\lambda$. We show that Alice can covertly and reliably transmit $\mathcal{O}(N)$ bits to Bob when the total number of packets sent from Jack to Steve is $N$.
\end{abstract}

\textbf{Keywords:} Covert Bits Through Queues, Covert Communication, Covert Communication over Queues, Covert Computer Network, Covert Wired Communication, Low Probability of Detection, LPD, Covert Channel, Covert Timing Channel, Covert Packet Insertion, Poisson Point Process, Renewal Point Process, $M/M/1$ Queue, $G/M/1$ Queue, Stochastic Processes, Information Theory, Relative Entropy.\\

\section{Introduction}
Secrecy and privacy are key concerns in modern communication systems. Most security research is focused on protecting the content of the message from being decrypted by an adversary, and there has been significant work in both traditional cryptographic approaches and information-theoretic secrecy approaches to achieve this goal. However, as clearly illustrated in recent high-profile security scenarios (e.g. the Snowden disclosures \cite{snowden}), it is often the mere presence of a message between two parties that must be hidden rather than just its content. Applications range from military scenarios, where the volume of radio traffic can indicate the presence and magnitude of activity, to domestic scenarios, where an authority might punish certain parties or at least shut down any communications, particularly those that are encrypted.

Stenography \cite{ker07pool} is a solution for hiding the existence of communication by embedding secret content in an overt message on a digital channel that is generally noiseless. And spread spectrum methods have been used for many years to provide covert communication in noisy continuous-valued channels. However, the information-theoretic limits of covert communication were only recently studied for additive white Gaussian (AWGN) channels \cite{bash12sqrtlawisit,bash_jsac2013,korzhik2005existence} and later extended to provide a comprehensive characterization of the limits of covert communication over discrete memoryless channels (DMCs), optical channels, and AWGN channels \cite{boulat_commmagg, che13sqrtlawbscisit,hou14isit,bloch15covert-isit,wang15covert-isit}. Hence, this is an active and rapidly growing area of information-theoretic research.

In contrast to the bulk of the work in this emerging area, our work in~\cite{soltani2015covert} considered covert communication over packet-based channels where the timing of the packets is modeled by a Poisson point process. However, in practice, many channels do not have packet timings that obey such a convenient model. Hence, here we extend our results from \cite{soltani2015covert} to scenarios where the packet timings are governed by a more general renewal process. As in \cite{soltani2015covert}, we will exploit the pioneering work of Anantharam and Verdu~\cite{verdubitsq} on the information-theoretic limits when communicating with packet timing through a queue; however, in contrast to our work in \cite{soltani2015covert}, the results from \cite{verdubitsq} must undergo non-trivial modification to fit the $G/M/1$ model introduced as part of our construction for covert communications. 

In particular, we consider a channel where an authorized (overt) transmitter Jack sends packets to an authorized (overt) receiver Steve, where the timings of packet transmission are modeled by a renewal (point) process with inter-arrival times governed by a probability density function (pdf) $p_0(x)$ and rate $\lambda=\left(\int_{x=0}^{\infty}x p_0(x)\right)^{-1}$ packets per second. Covert transmitter Alice wishes to transmit data to a covert receiver, Bob, on this channel in the presence of an attentive adversary, Willie, who is monitoring the channel between Alice and Bob precisely to detect such transmissions. We consider two scenarios in detail.

In the first scenario, we assume: (1) the warden Willie is not able to see packet contents, and therefore cannot authenticate the source of the packets (e.g., whether they are actually sent by Jack); and (2) Alice is restricted to packet insertion. Willie is aware that the timing of the packets of the allowed (i.e. overt) communication link follows a renewal process with inter-arrival time pdf $p_0(x)$, so he seeks to apply hypothesis testing to verify whether the packet process has the proper characteristics.

In \cite{soltani2015covert}, the inter-arrival time was exponential, and thus for the packet insertion we were able to exploit the fact that the superposition of two independent Poisson point processes is a Poisson point process; Alice simply generated a Poisson point process of the appropriate rate and used it to govern the timings of her packet insertions onto the Jack-to-Steve channel. However, such a technique does not readily extend to channels governed by non-Poisson renewal processes, and thus a different technique is required here. In particular, Alice will generate a renewal process with a slightly higher rate than that of Jack by scaling $p_0(x)$. This allows Alice to transmit covert packets at a low rate along with Jack's transmitted packets at rate $\lambda$. For a given packet timing generated from her (slightly) faster renewal process, she will decide whether she should send a covert or overt packet by generating a Bernoulli random variable with a low probability of ``success'', where ``success'' results in the transmission of a covert packet. However, a complication arises, as this approach requires that Alice always have overt packets available to send when indicated. Therefore, she first buffers some overt packets. In particular, Alice employs a two-phased system. In the first phase, she will (slightly) slow down the transmission of packets from Jack so as to build up a backlog of packets in her buffer. In the next phase, she then generates the renewal process with a slightly higher rate and sends both covert and overt packets.

The first result is established by analyzing the two phases for covertness, where covertness is defined formally as in \cite{bash_jsac2013}: if $\mathbb{P}_{FA}$ is the probability of false alarm at Willie's detector and $\mathbb{P}_{MD}$ is his probability of missed detection, a scheme is covert if Willie's sum of error probabilities $\mathbb{P}_{FA} + \mathbb{P}_{MD} > 1-\epsilon$ for any $0<\epsilon<1$.  First, in Lemma~\ref{lem:2}, we show that Alice can collect and store $\mathcal{O}\left(\sqrt{N}\right)$ packets in a packet stream of length $N$ transmitted by Jack during the first phase while being covert; conversely, if she collects more, she will be detected by Willie with high probability. Then, we show that (see Theorem~3) if Alice decides to transmit $\mathcal{O}\left(\sqrt{N}\right)$ packets to Bob during the second phase, where $N$ is the total number of packets sent by Jack, she will remain covert. A crucial part of this proof is showing that Alice has buffered enough packets during the first phase so as to not run out of overt packets during the second phase. Finally, conversely, we prove that if Alice transmits ${\omega}\left(\sqrt{N}\right)$ packets, she will be detected by Willie with high probability.

In the second scenario, we assume that Willie can look at packet contents and therefore can verify packets' authenticity. Thus, Alice is not able to insert packets, but we allow Alice a secret key and the ability to alter the packet timings to convey information to Bob, whom is receiving the packets through a $G/M/1$ queue with service rate $\mu> \lambda$.  To do such, Alice designs an efficient code, where a codeword consists of a sequence of packet timings drawn from the same process as the overt traffic; hence, a codeword transmitted with those packet timings is undetectable. However, there is a causality constraint, as Alice clearly cannot send the next packet (i.e. codeword symbol) unless she has a packet from the Jack to Steve link available to transmit.  This suggests the following two-stage process. In the first stage, Alice covertly slows down the transmission of the packets from Jack to Steve so as to buffer some number of packets. In the second stage, Alice continues to add packets transmitted by Jack to her buffer while releasing packets with the inter-packet delay appropriate for the chosen codeword.

Alice's scheme breaks down when her buffer is empty at any point before completing the codeword transmission. Hence, the question becomes: how long must Alice collect packets during the first stage so as to guarantee (with high probability) that she will not run out of packets before the completion of codeword transmission during the second stage? First, in Lemma~\ref{lem:3}, we show that Alice can achieve a positive capacity for the $G/M/1$ queue if she embeds information in the packet timings in this fashion. Building on Lemma~\ref{lem:2} and Lemma~\ref{lem:3}, we prove (Theorem~4) that, using our two-stage covert communications approach, Alice can reliably and covertly transmit $\mathcal{O}(N)$ bits in a packet stream of length $N$.

The remainder of the paper is organized as follows. In Section~\ref{modcons}, we present the system model, definitions, and metrics. Then, we review the results for Poisson packet channels in Section III and provide constructions and their analysis for non-Poisson channels in Section IV. Section V contains the discussion and section VI summaries our conclusions.
\section{System Model, Definitions, and Metrics}
\label{modcons}

\subsection{System Model} \label{sec:1}
Suppose that Jack transmits packets to Steve, while a watchful warden Willie observes the packets flowing from Jack to Steve and attempts to discern any irregularities that might indicate someone is altering aspects of the packet stream to convey information. Indeed, Alice's goal is to do exactly that: manipulate the packets sent by Jack to Steve so as to communicate covertly with Bob, who is located beyond the warden Willie but before the intended recipient Steve. One such scenario illustrating the location of the various parties is shown in Fig.~\ref{fig:SysMod}. We consider the two specific operating scenarios for this problem.

{\em Scenario 1 (Packet insertion):} In Scenario 1, which is shown in Fig. (\ref{fig:SysMod}) and analyzed in Section IV.A, we assume that: 
\begin{enumerate}
\item Transmission times for the packets transmitted by Jack are modeled by a renewal process in which the inter-arrival times are positive i.i.d random variables with probability density function (pdf) $p_0(x)$ and transmission rate is $\lambda=\left(\int_{0}^{\infty}x p_0(x)\right)^{-1}$. We will term this a ``renewal channel''.
\item Willie is not able to authenticate the packets to see if a packet is coming from Jack.
\item Alice, with knowledge of $p_0(x)$, is allowed to insert and transmit her own packets, buffer and release Jack's transmitted packets when she desires,  but not share a codebook with Bob.
\item Bob is able to authenticate, receive and remove the covert packets; therefore, Steve does not observe the covert packets.
\item Willie knows that the legitimate communication is modeled by a renewal process with inter-arrival time pdf $p_0(x)$, and he knows all of the characteristics of Alice's packet buffering and release scheme. 
\end{enumerate}

\noindent In this scenario, we determine the number of packets that Alice can insert covertly into the channel while remaining covert. \QEDB

{\em Scenario 2 (Packet timing):} In Scenario 2, which is shown in Fig.~\ref{fig:SysMod2} and analyzed in Section IV.B, we assume that: 
\begin{enumerate}
\item Packet transmission times are modeled by a renewal process (as in Scenario 1).
\item Willie is able to access packet contents and hence can authenticate whether a packet comes from Jack. Therefore, Alice cannot insert packets into the channel.
\item Alice and Bob can share a secret codebook based on which Alice alters the packet timings by buffering packets and releasing them when she desires into the channel, thereby enabling covert communication through packet timing control.
\item Bob has access to the resulting packet stream only after it passes through a queue which
\begin{itemize}
\item processes the packets on a First-In-First-Out (FIFO) basis, i.e, upon departure of a packet, the next packet waiting in queue is processed.
\item has i.i.d exponential service times. 
\item has a service rate of $\mu > \lambda$.
\item is in equilibrium, and its input and output rate are equal.
\end{itemize}
\item Willie knows that the legitimate communication is modeled by a renewal process with inter-arrival time pdf $p_0(x)$, and he knows all of the characteristics of Alice's packet buffering and release scheme except a secret key that is pre-shared between Alice and Bob. 
\end{enumerate}

\noindent In this scenario, we calculate the number of bits that Alice can reliably and covertly transmit to Bob without detection by Willie. \QEDB
\subsection{Definitions}
\label{defs}

For the queue in Scenarios 2, denote the $i^{th}$ inter-arrival time and inter-departure time between the $i^{th}$ and $(i+1)^{th}$ packet by $A_i$ and $D_i$ respectively, where $1\leq i \leq n$. Therefore, $\sum_{j=1}^{i}A_j$ and $\sum_{j=1}^{i}D_j$ are the arrival and departure times of the $i^{th}$ packet respectively. Also, denote the service and idling time for the $i^{th}$ packet by $S_i$ and $W_i$ respectively. Note that $D_i=W_i+S_i$, and $W_i$ is the time between $(i-1)^{th}$ departure and $i^{th}$ arrival. Therefore,
\begin{align}
W_i = \max\bigg\{0,\sum\limits_{j=1}^{i}A_j - \sum\limits_{j=1}^{i-1}D_i\bigg\}
\end{align}

\begin{definition} \label{def:v1} \cite[Definition 1]{verdubitsq}: An $\left(n,M,T, \delta\right)$-code for a queue consists of a codebook of $M$ codewords, each of which is a vector of $n$ positive inter-arrival times $\{a_i\}_{i=1}^{n}$ such that the $k^{th}$ arrival occurs at $\sum_{i=1}^{k} a_i$; a decoder which upon observation of all $n$ departures selects the correct codeword with probability greater than $1- \delta$, assuming that the queue is in equilibrium. The $n^{th}$ departure from the queue occurs on the average (over equiprobable codewords and the queue distributions) no later than $T$. The rate of an $\left(n,M,T, \delta\right)$-code is defined as $\frac{\log M}{T}$.
\end{definition}
Note that in Definition 1 in \cite{verdubitsq}, the queue is initially empty. However, similar to \cite[Theorem 6]{verdubitsq}, this condition is replaced with the condition that the queue is in equilibrium in the above definition. Also, \cite[Definition 1]{verdubitsq} includes the condition that the inter-arrival times are non-negative. However, we have changed non-negative to positive since in all of our scenarios the inter-arrival times are positive.
\begin{definition} \label{def:rate} \cite[Definition 2]{verdubitsq}
$R$ is $ \delta-$achievable at output rate $\lambda$ if for all $\gamma>0$ there exists a sequence of $\left(n,M,n/\lambda, \delta\right)$-codes such that
\begin{align}
\lambda \frac{\log M}{n} > R-\gamma
\end{align}
Rate $R$ is achievable at output rate $\lambda$ if it is $ \delta$-achievable at output rate $\lambda$ for all $0<\delta<1$. The capacity of the queue at output rate $\lambda$, is the maximum achievable rate at output rate $\lambda$.
\end{definition}
\begin{definition} (Hypothesis Testing) Willie is faced with a binary hypothesis test: the null hypothesis ($H_0$) corresponds to the case that Alice does not transmit, and the alternative hypothesis $H_1$ corresponds to the case that Alice transmits. We denote the distributions of sequences of inter-arrival times that Willie observes by $\mathbb{P}_{1}$ and $\mathbb{P}_{0}$ under $H_1$ and $H_0$ respectively.

Also, we denote by $\mathbb{P}_{FA}$ the probability of rejecting $H_0$ when it is true (type I error or false alarm), and $\mathbb{P}_{MD}$ the probability of rejecting $H_1$ when it is true (type II error or missed detection). We assume that Alice's probability of transmission is $\frac{1}{2}$ and Willie knows that. Also, we assume that Willie uses classical hypothesis testing  and seeks to minimize $\mathbb{P}_{FA} + \mathbb{P}_{MD}$; the generalization to arbitrary prior probabilities is straightforward, see~\cite{bash_jsac2013}. 
\end{definition}
\begin{definition} (Covertness) Alice's transmission is {\em covert} if and only if she can bound Willie's average sum of probabilities of error $\mathbb{E}[\mathbb{P}_{FA}+\mathbb{P}_{MD}]$ by $1-\epsilon$ for any $\epsilon>0$~\cite{bash_jsac2013}.
\end{definition}
\begin{definition} (Reliability) A transmission scheme is {\em reliable} if and only if the probability that a codeword transmission from Alice to Bob is unsuccessful is upper bounded by $\zeta$ for any $\zeta>0$. Note that this metric applies in Scenario 2.
\end{definition}
\subsection{Metrics}
In this paper, a covert packet is a packet that is inserted by Alice into the channel (not originally from Jack), and an overt packet is a packet that is transmitted originally by Jack. We denote the number of covert packets that Alice can insert into the channel (in Scenarios 1) and the number of overt packets that Alice can buffer covertly (in Lemma~\ref{lem:2}) by $N_c$. Also, we denote the amount of covert information that Alice can convey to Bob through inter-packet delays, in Scenario 2 by $N_b$.

\begin{figure}
\begin{center}
\includegraphics[width=\textwidth/2 ,height=\textheight,keepaspectratio]{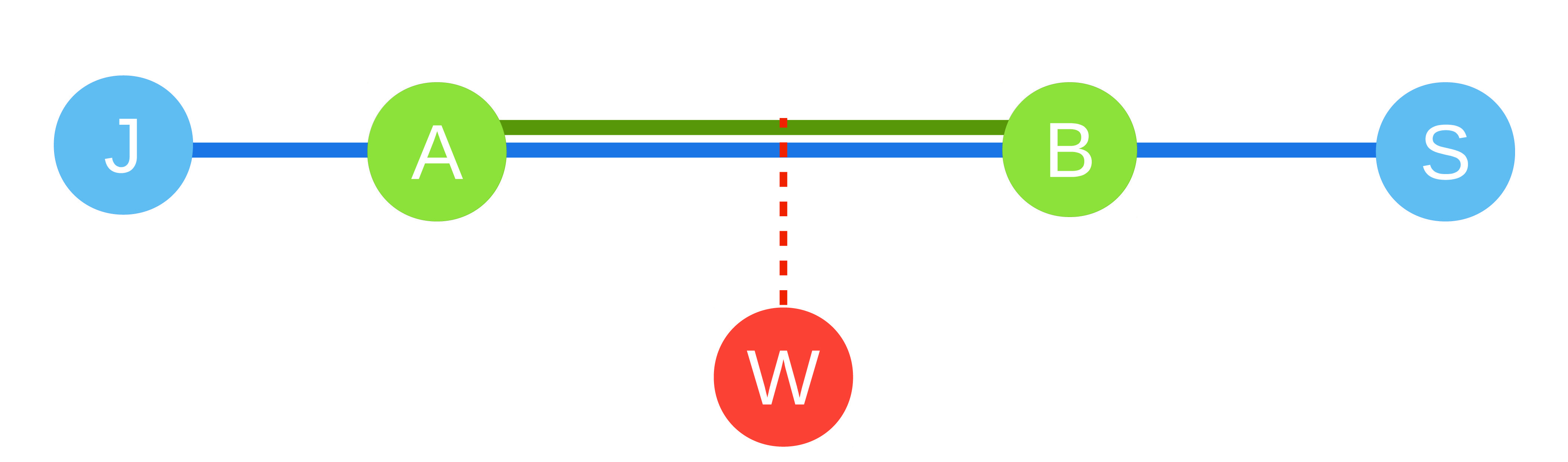}
\end{center}
 \caption{System configuration for Scenario~1: Willie cannot authenticate packets. Therefore, Alice inserts her packets into the channel between Jack and Steve to communicate covertly with Bob. The blue color shows the legitimate communication and the green shows the covert communication.}
 \label{fig:SysMod}
 \end{figure}

\begin{figure}
 \begin{center}
\includegraphics[width=\textwidth/2,height=\textheight,keepaspectratio]{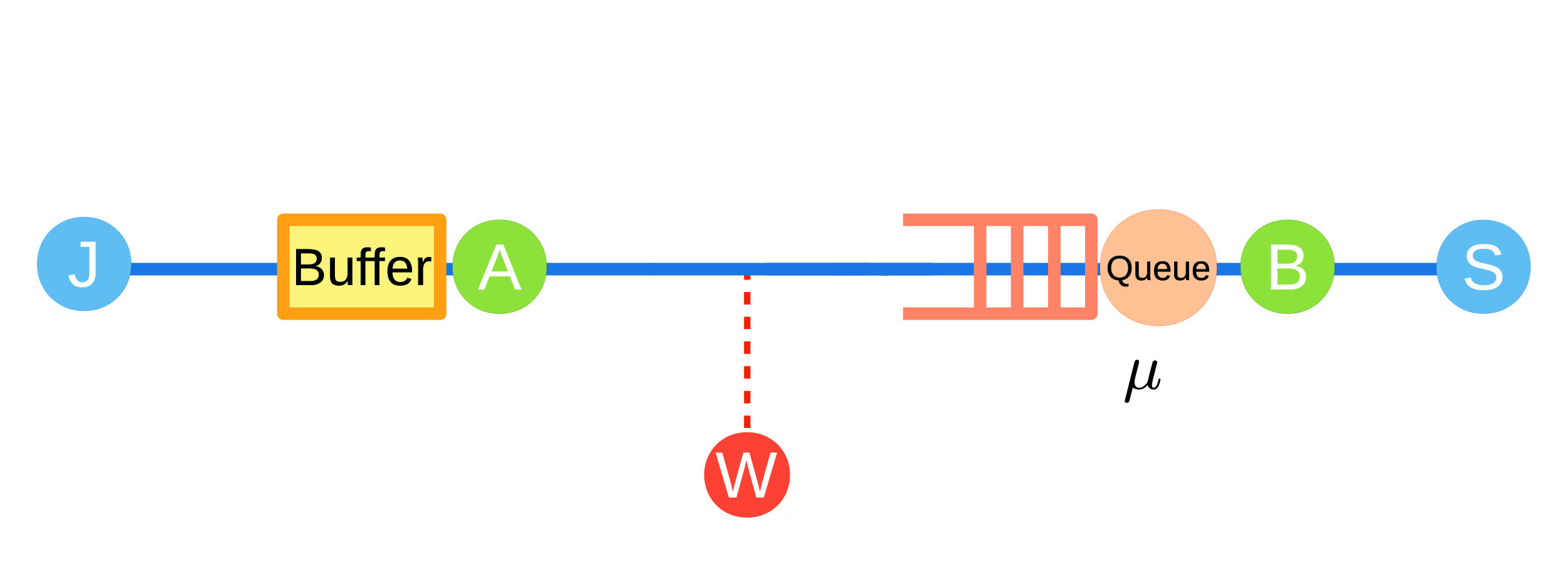}
 \end{center}
 \caption{System configuration for Scenario 2: 
Willie can authenticate packets so Alice embeds information in the packet timings. Alice is able to buffer packets in order to alter packet timings and Bob has access to the packet stream through an exponential server queue with service rate $\mu$.}
 \label{fig:SysMod2}
 \end{figure}
\section{Poisson Channels}
In this section, we review the results for Poisson channels. Consider the following Theorems~\cite{soltani2015covert}:
\begin{thm}
Consider Scenario 1. If the timings of the packets ar modeled by a Poisson point process of rate $\lambda$, and Alice is allowed to insert packets, she can covertly insert $\mathcal{O}(\sqrt{\lambda T})$ packets in a time interval of length $T$. Conversely, if Alice attempts to insert $\omega\left(\sqrt{\lambda T}\right)$ packets in a time interval of length $T$, there exists a detector that Willie can use to detect her with arbitrarily low sum of error probabilities $\mathbb{P}_{FA} + \mathbb{P}_{MD}$.
\end{thm}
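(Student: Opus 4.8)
The plan is to treat achievability and converse separately, in both cases exploiting the defining feature of the Poisson process: the superposition of independent Poisson point processes is again Poisson, with rate the sum of the rates.

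\textbf{Achievability.} First I would have Alice generate an independent homogeneous Poisson point process of rate $\lambda_c = c\sqrt{\lambda/T}$ for a small constant $c>0$, and use its points to time her inserted covert packets; then the number she inserts is $N_c = \lambda_c T = c\sqrt{\lambda T} = \mathcal{O}(\sqrt{\lambda T})$. Since Bob authenticates and removes the covert packets before they reach Steve, the arrival process Willie observes on $[0,T]$ is, under $H_0$ (Alice silent), a Poisson process of rate $\lambda$, and under $H_1$ (Alice active), the superposition of Jack's and Alice's processes, i.e.\ a Poisson process of rate $\lambda+\lambda_c$; denote the induced distributions on the observed sequence of inter-arrival times by $\mathbb{P}_0$ and $\mathbb{P}_1$. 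The key computation is the relative entropy of two homogeneous Poisson processes on an interval of length $T$,
\begin{align}
D(\mathbb{P}_1\|\mathbb{P}_0) = T\Big[\lambda - (\lambda+\lambda_c) + (\lambda+\lambda_c)\ln\tfrac{\lambda+\lambda_c}{\lambda}\Big],
\end{align}
which, after expanding the logarithm in powers of $\lambda_c/\lambda$, equals $\tfrac{T\lambda_c^2}{2\lambda}\bigl(1+o(1)\bigr) = \tfrac{c^2}{2}\bigl(1+o(1)\bigr)$. By Pinsker's inequality $\lVert\mathbb{P}_1-\mathbb{P}_0\rVert_{\mathrm{TV}}\le\sqrt{\tfrac12 D(\mathbb{P}_1\|\mathbb{P}_0)}$, and the optimal hypothesis test obeys $\mathbb{E}[\mathbb{P}_{FA}+\mathbb{P}_{MD}] \ge 1-\lVert\mathbb{P}_1-\mathbb{P}_0\rVert_{\mathrm{TV}}$ (the averaging over Alice's random insertion process is already absorbed into $\mathbb{P}_1$). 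Choosing $c$ small enough as a function of $\epsilon$ makes this $\ge 1-\epsilon$, which is covertness. It is worth remarking that, in contrast to the renewal construction developed later in the paper, no buffering phase is needed here precisely because superposition preserves the Poisson law, so Alice can insert ``from scratch''.

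\textbf{Converse.} For the converse I would exhibit a concrete detector: Willie counts the number $Y$ of packets he observes in $[0,T]$ and declares $H_1$ iff $Y \ge \lambda T + N_c/2$. Under $H_0$, $Y\sim\mathrm{Poisson}(\lambda T)$, so Chebyshev gives $\mathbb{P}_{FA}=\Pr\!\big(Y-\lambda T\ge N_c/2\big)\le 4\lambda T/N_c^2$. Under $H_1$, $Y$ stochastically dominates $Y_0+N_c$ with $Y_0\sim\mathrm{Poisson}(\lambda T)$ the count of Jack's arrivals, hence $\mathbb{P}_{MD}\le\Pr\!\big(\lambda T - Y_0 > N_c/2\big)\le 4\lambda T/N_c^2$. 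When $N_c = \omega(\sqrt{\lambda T})$ we have $\lambda T/N_c^2 \to 0$, so $\mathbb{P}_{FA}+\mathbb{P}_{MD}\to 0$. One should also note that even if Alice additionally buffers and re-releases some of Jack's packets, such timing manipulations can only relocate overt arrivals within the window, not create or destroy them, so the count-based detector is affected only by boundary effects that are $o(\sqrt{\lambda T})$ and the conclusion is unchanged.

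\textbf{Main obstacle.} The delicate point is the achievability side: one must verify that $D(\mathbb{P}_1\|\mathbb{P}_0)$ accumulated over the entire observation window really is $\Theta(T\lambda_c^2/\lambda)$ with uniform control of the $o(1)$ term in the logarithmic expansion (so that it does not blow up as $T\to\infty$ with $\lambda_c\to 0$ jointly), and that the total-variation lower bound on the sum of error probabilities is the right quantity to compare against the covertness definition; everything else is a routine Chernoff/Chebyshev concentration argument.
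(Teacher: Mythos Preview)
Your proposal is correct and matches the approach the paper attributes to its predecessor~\cite{soltani2015covert}: the paper does not reprove Theorem~1 here but explicitly notes (in the Introduction and in Section~III) that in the Poisson case Alice simply generates an independent low-rate Poisson process and exploits the superposition property, with covertness following from the KL divergence between two Poisson processes and the converse from a simple count-based test---precisely your construction and analysis. Your observation that no buffering phase is needed because superposition preserves the Poisson law is exactly the point the paper makes when motivating why the general renewal case (Theorem~3) requires the more elaborate two-phase scheme.
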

\begin{thm} \label{th2}
Consider Scenario 2. If the timings of the packets ar modeled by a Poisson point process of rate $\lambda$, by embedding information in the inter-packet delays, Alice can covertly and reliably transmit $\mathcal{O}\left(\lambda T \right)$ bits to Bob in a time interval of length $T$. 
\end{thm}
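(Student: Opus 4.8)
The plan is to prove achievability with a two‑phase random‑coding scheme, reducing the reliability part to Anantharam and Verdú's ``bits through queues'' theorem and using the memorylessness of the overt traffic for covertness. Alice and Bob share a codebook of $M$ codewords, each a vector of $n=\Theta(\lambda T)$ positive inter‑arrival times drawn i.i.d.\ from the $\mathrm{Exp}(\lambda)$ density $p_0$ --- i.e., each codeword is a realization of a rate‑$\lambda$ Poisson process. In the transmission phase Alice releases her buffered packets at exactly the chosen codeword's inter‑arrival times. Since the codeword law equals the overt (null) law $\mathbb{P}_0$, the stream Willie sees during this phase is distributed exactly as $\mathbb{P}_0$, so it contributes nothing to $\|\mathbb{P}_1-\mathbb{P}_0\|_{\mathrm{TV}}$ and nothing to his optimal error sum; the only covertness cost comes from the short buffering phase, handled last.

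For reliability, observe that when Alice releases a rate‑$\lambda$ Poisson stream into the exponential server of rate $\mu>\lambda$, the system is an $M/M/1$ queue in equilibrium, so by Burke's theorem the departure process Bob observes is again a rate‑$\lambda$ Poisson process; hence the $n$ inter‑departure times $D^n$ are i.i.d.\ $\mathrm{Exp}(\lambda)$ and $h(D^n)=n(1-\ln\lambda)$. On the other hand, given the codeword $A^n$ and the (bounded) initial queue state, $D^n$ is a piecewise‑affine, unit‑Jacobian, injective image of the i.i.d.\ $\mathrm{Exp}(\mu)$ service times through the FIFO Lindley recursion, so $h(D^n\mid A^n)\le n(1-\ln\mu)+\mathcal{O}(1)$. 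Consequently
\[
\frac{1}{n}\,I(A^n;D^n)\;\ge\;\ln\frac{\mu}{\lambda}-o(1)\;>\;0,
\]
and the channel $A^n\to D^n$ with $\mathrm{Exp}(\lambda)$ input is information stable. Invoking the achievability argument of \cite{verdubitsq} within the framework of Definitions~\ref{def:v1}--\ref{def:rate} (with the equilibrium‑start version, \cite[Theorem 6]{verdubitsq}), for every $\gamma>0$ and all large $n$ there is such a codebook of size $M=e^{n(\ln(\mu/\lambda)-\gamma)}$ that Bob decodes with error probability at most $\zeta$. Since $n=\Theta(\lambda T)$ symbols fit, on average, in $[0,T]$, this gives $N_b=\log M=\Theta(\lambda T)$ bits.

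The remaining, and hardest, point is the causality constraint: Alice cannot release her $k$‑th packet before Jack's $k$‑th packet has arrived, so she must first build a backlog. I would use a buffering phase of length $T_1=\Theta(T)$ in which Alice releases a Poisson stream of the slightly reduced rate $\lambda(1-\alpha)$ (feasible because she is accumulating and so never runs dry), ending with $B=\alpha\lambda T_1$ buffered packets; the relative entropy $D(\mathbb{P}_1\|\mathbb{P}_0)$ accrued in this phase is $\Theta(\alpha^2\lambda T_1)=\Theta(B^2/(\lambda T))$, so by Pinsker's inequality covertness is preserved provided $B$ is at most a constant multiple (proportional to $\epsilon$) of $\sqrt{\lambda T}$. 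During the transmission phase Alice keeps appending Jack's fresh arrivals to the buffer, so its occupancy is $B$ plus the difference of two independent rate‑$\lambda$ Poisson counts --- a zero‑drift process whose excursions over the $n=\Theta(\lambda T)$ symbols are themselves $\Theta(\sqrt{\lambda T})$. The technical heart of the proof is to reconcile these two $\sqrt{\lambda T}$ scales: I would take $B$ as large as the covertness budget permits and use a maximal (Kolmogorov/Doob) inequality to control the probability that the buffer empties; when it does, Alice stalls until Jack's next arrival and then resumes the codeword, which perturbs only $o(n)$ of the codeword's inter‑arrival times (by $\mathrm{Exp}(\lambda)$‑scale amounts, adding $o(T)$ to the total delay), few and small enough to be absorbed into the fixed rate margin $\gamma$ by a threshold decoder and to leave Willie's error sum still $\ge 1-\epsilon$. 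Combining the $o(1)$ covertness cost of the buffering phase with the exact covertness of the transmission phase and the reliability bound above then yields a scheme that is simultaneously covert and reliable and conveys $\Theta(\lambda T)$ bits. I expect the careful bookkeeping of this buffer/stall trade‑off --- ensuring the occasional timing perturbations compromise neither the decoding error nor Willie's error sum --- to be where most of the real work lies.
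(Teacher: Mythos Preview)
Your overall architecture is the paper's: a two-phase scheme in which codewords are i.i.d.\ $\mathrm{Exp}(\lambda)$ inter-arrival vectors (so the transmission phase is \emph{exactly} covert), reliability comes from the Anantharam--Verd\'u $M/M/1$ result $C(\lambda)=\lambda\log(\mu/\lambda)$, and the only covertness cost is the buffering phase, handled by a KL/Pinsker bound of order $\alpha^2\lambda T_1$. All of that is right and is how the paper (via its general-renewal Theorem~\ref{th4}, which specializes to the Poisson case with Fisher constant $c=1$) proceeds.

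Where you diverge is the causality/buffer analysis, and here you make the problem harder than it is and then head toward a fix that would not work. You correctly observe that the covert buffer $B$ and the zero-drift random-walk excursion over $n$ symbols are both $\Theta(\sqrt{\lambda T})$, but you overlook the second free constant: the split $\psi$ between the two phases. In the paper's scheme the buffering phase carries a fraction $\psi$ of the packets and the transmission phase $1-\psi$, so the buffer is of order $\epsilon\sqrt{\psi N}$ while the maximal excursion is of order $\sqrt{(1-\psi)N}$. A reflection/$\mathrm{erf}$ bound then gives
\[
\mathbb{P}(\text{buffer empties})\;\le\;1-\operatorname{erf}\!\Big(\epsilon\sqrt{\tfrac{\psi}{32(1-\psi)}}\Big),
\]
which is driven below any prescribed $\zeta$ simply by taking $\psi/(1-\psi)$ large enough (a constant depending only on $\epsilon,\zeta$). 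No stalling is ever needed, and the transmission phase still carries $(1-\psi)N=\Theta(N)$ symbols, hence $\Theta(\lambda T)$ bits.

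Your stall-and-resume fallback is not just unnecessary but dangerous for covertness. Each stall replaces a codeword inter-arrival $a_k\sim\mathrm{Exp}(\lambda)$ by $a_k+W$ with $W>0$, and the event ``buffer empty at step $k$'' is correlated with the preceding codeword symbols, so the stream Willie sees during the transmission phase is no longer distributed as $\mathbb{P}_0$. If you have not already forced $\mathbb{P}(\text{empty})<\zeta$ via the choice of $\psi$, the expected number of stalls over $n$ steps of a zero-drift walk started at $\Theta(\sqrt{n})$ is itself $\Theta(\sqrt{n})$, and even an $O(1)$ KL contribution per perturbed symbol blows up the divergence bound you need for $\mathbb{P}_{FA}+\mathbb{P}_{MD}\ge 1-\epsilon$. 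Conversely, if you \emph{have} made $\mathbb{P}(\text{empty})<\zeta$, then stalling is moot: declare failure on the $\zeta$-probability event and you already meet the reliability definition. So the ``careful bookkeeping of the buffer/stall trade-off'' you flag as the hard part should be replaced by the one-line choice of $\psi$; that is the paper's resolution.
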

See details of the proofs for each of the above Theorems including the communication schemes, construction and analysis in ~\cite{soltani2015covert}.

\section{Non-Poisson Channels}
As described in Section I, the packet arrival processes measured in many networks demonstrate non-Poisson behavior. Hence, in this section, we extend our results from section III to the non-Poisson case.
\subsection{General Renewal Model, Packet Insertion (Scenario 1)}
In this section, we consider Scenario 1: On a renewal channel, Willie cannot authenticate packets to see whether they are from Jack or Alice, and Alice is only allowed to send information to Bob by inserting packets into the channel. 

Per Section II, we assume that the inter-arrival times of the packets transmitted by Jack are i.i.d and their pdf is $p_0(x)$; thus Jack's transmission rate is $\lambda=\left(\int_{0}^{\infty}x p_0(x)\right)^{-1}$. For the transmission of covert packets, Alice generates a renewal process $B$ in which the pdf of the inter-arrival times is $p_1(x,\rho)=\frac{1}{1-\rho}p_0(\frac{x}{1-\rho})$, where $0<\rho<1$. Note that $p_1(x,\rho)$ is a scaled version of $p_0(x)$ that (slightly) lengthens the inter-arrival times, and therefore the rate of the generated renewal process $\lambda_B$ is (slightly) higher than Jack's transmission rate $\lambda$. This, allows Alice to transmit covert packets at a low rate ($\lambda_B-\lambda$) as well as overt packets at rate $\lambda$. To do this, Alice performs a virtual Bernoulli splitting (p-thinning) on $B$, i.e., each time she wants to send a packet, she decides based on a Bernoulli random variable whether to send an overt or covert packet. Assuming that Alice always has covert packets to send, the proposed scheme requires Alice to also have overt packets always available so that if the result of the Bernoulli process leads sending an overt packet, she has one available to send. This suggests that Alice must first build up some number of overt packets in her buffer prior to starting the above procedure. 

In particular, Alice will employ a two-phase system. In the first phase, she will (slightly) slow down the transmission of packets from Jack so as to build up a backlog of packets in her buffer. In the next phase, she generates a renewal process with a rate higher than Jack's transmission rate, and starts sending overt and covert packets according to a Bernoulli splitting procedure as described above. 
To see how many packets Alice can buffer in the first phase, consider the following Lemma.

\begin{lem}\label{lem:2} If Alice can buffer packets on the link from Jack to Steve where the pdf of the inter-arrival times are $p_0(x)$, she can covertly buffer $\mathcal{O}\left(\sqrt{N}\right)$ packets in a packet stream of length $N$ as long as $p_1(x,\rho)=(1-\rho) p_0\left(\left(1-\rho\right)x\right)$ satisfies the following regulatory conditions \cite[Ch. 2.6]{kullback1968information}:
 \begin{align}
\label{c1}\text{• }&\frac{\partial \log{p_1}}{ \partial \rho}, \frac{\partial^2 \log{p_1}}{\partial \rho^2},  \frac{\partial^3 \log{p_1}}{\partial \rho^3} \text{ exist, } \forall \rho\in (0,1)\\
\text{• }\nonumber & \forall \rho \in (0,1),  \bigg|\frac{\partial p_1}{\partial \rho}\bigg| < F(x), \text{ s.t. } \int_{x=0}^{\infty}F(x)dx<\infty, \\
&\nonumber \bigg|\frac{\partial^2 p_1}{\partial \rho^2}\bigg| < G(x),\text{ s.t. } \int_{x=0}^{\infty}G(x)dx<\infty \\
&\nonumber  \bigg|\frac{\partial^3 \log p_1}{\partial \rho^3}\bigg| < H(x), \text{ s.t. } \int_{x=0}^{\infty}p_0(x)H(x)dx<\xi<\infty\\
\label{c2}  & \text{ where }\xi \text{ is independent of }\rho\\
\label{c3}\text{• } & \int_{x=0}^{\infty}\frac{\partial p_1(x,\rho)}{\partial \rho}\bigg|_{\rho=0}dx= \int_{x=0}^{\infty}\frac{\partial^2 p_1(x,\rho)}{\partial \rho^2}\bigg|_{\rho=0}dx=0
\end{align}
\noindent Conversely, if Alice buffers $\omega\left(\sqrt{N}\right)$ packets in a packet stream off length $N$, there exists a detector that Willie can use to detect such a buffering with arbitrarily low sum of error probabilities $\mathbb{P}_{FA} + \mathbb{P}_{MD}$.
\end{lem}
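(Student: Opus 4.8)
The statement has two halves — an achievability half (a scheme that buffers $\mathcal{O}(\sqrt{N})$ packets covertly) and a converse half (buffering $\omega(\sqrt{N})$ is detectable) — and I would prove them separately, the first by a divergence bound and Pinsker's inequality, the second by a counting detector.

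\emph{Achievability.} Alice's scheme is the one sketched above: she feeds Jack's arrivals into a FIFO buffer and releases them according to a renewal process whose inter-release pdf is $p_1(x,\rho)=(1-\rho)p_0\!\left((1-\rho)x\right)$, a time-dilated copy of $p_0$ with rate $\lambda(1-\rho)<\lambda$, where $\rho=c/\sqrt{N}$ and $c$ is a constant to be fixed. Since the release rate is strictly below Jack's arrival rate, the backlog has positive drift, and by elementary renewal theory applied over the horizon $T_N\approx N/\lambda$ during which Jack emits his $N$ packets, the number left in Alice's buffer at the end concentrates around $\rho N=c\sqrt{N}=\mathcal{O}(\sqrt{N})$; I would make this quantitative with a Chebyshev bound, using the finite inter-arrival variance guaranteed (along with a finite Fisher information) by \eqref{c1}--\eqref{c3}. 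For covertness, note that — up to a single start-up transient while the buffer first fills, handled by a coupling and contributing a vanishing correction — the stream Willie sees is a renewal process with inter-event pdf $p_1(\cdot,\rho)$ under $H_1$ and with pdf $p_0=p_1(\cdot,0)$ under $H_0$, of which he observes $\Theta(N)$ inter-event times. By optimality of the likelihood-ratio test and Pinsker's inequality, $\mathbb{E}[\mathbb{P}_{FA}+\mathbb{P}_{MD}]\ge 1-\sqrt{\tfrac12\,D(\mathbb{P}_1\|\mathbb{P}_0)}$, and by the chain rule $D(\mathbb{P}_1\|\mathbb{P}_0)=\Theta(N)\cdot D\!\left(p_1(\cdot,\rho)\,\|\,p_0\right)$. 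Conditions \eqref{c1}--\eqref{c3} are exactly Kullback's regularity hypotheses \cite[Ch.~2.6]{kullback1968information} licensing the expansion $D\!\left(p_1(\cdot,\rho)\|p_0\right)=\tfrac12 J\rho^2+O(\rho^3)$, with $J$ the Fisher information at $\rho=0$ and the cubic remainder controlled by $\int p_0(x)H(x)\,dx<\xi$; hence $D(\mathbb{P}_1\|\mathbb{P}_0)=\Theta(N)\cdot\big(\tfrac12 J c^2/N+O(c^3 N^{-3/2})\big)\to\tfrac12 J c^2$. Choosing $c=c(\epsilon)$ small enough makes this at most $2\epsilon^2$, so $\mathbb{E}[\mathbb{P}_{FA}+\mathbb{P}_{MD}]\ge 1-\epsilon$, while Alice still buffers $\Theta(\sqrt{N})$ packets.

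\emph{Converse.} For an arbitrary buffering scheme that leaves $N_c=\omega(\sqrt{N})$ packets in the buffer at the end of the stream, I would have Willie count departures over a window of duration $\Theta(N/\lambda)$: whatever Alice does, in that window she has received at most $N$ of Jack's packets and holds $N_c$ of them, so she can have released at most $N-N_c$, and Willie's observed count is short by order $N_c$ relative to its nominal value. Under $H_0$ the count is a renewal counting variable with mean $\approx N$ and standard deviation $\Theta(\sqrt{N})$ (again using finite inter-arrival variance); since $N_c=\omega(\sqrt{N})$ dominates this spread, the threshold test ``declare $H_1$ iff fewer than $N-N_c/2$ departures were seen'' gives, by Chebyshev, $\mathbb{P}_{FA}=O(N/N_c^2)=o(1)$ and $\mathbb{P}_{MD}=o(1)$ (the latter after a matching concentration bound on the count under $H_1$, for which the ``knows all the characteristics of Alice's scheme'' assumption is used to place the threshold). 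Hence $\mathbb{P}_{FA}+\mathbb{P}_{MD}$ can be driven below any $\epsilon>0$, paralleling the converse for the Poisson channel in \cite{soltani2015covert}.

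\emph{Main obstacle.} The delicate step is the achievability divergence bound: one must show that the third-order term in the expansion of $D\!\left(p_1(\cdot,\rho)\|p_0\right)$, after multiplication by the $\Theta(N)$ factor from the number of observed inter-event times, is genuinely $o(1)$ — which is precisely what the $\rho$-uniform domination in \eqref{c2} (in particular $\int p_0 H<\xi$ with $\xi$ independent of $\rho$) is designed to guarantee — while simultaneously verifying that the buffer-occupancy variable concentrates at $\Theta(\sqrt{N})$, so that the achievable $N_c$ is not merely an average. The empty-buffer start-up transient is a smaller but real bookkeeping point that the coupling argument must absorb.
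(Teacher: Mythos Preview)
Your proposal is correct and follows essentially the same route as the paper: for achievability, a time-dilated release process with $\rho\propto 1/\sqrt{N}$, covertness via the Pinsker-type bound $\mathbb{P}_{FA}+\mathbb{P}_{MD}\ge 1-\sqrt{\tfrac12\mathcal{D}}$, the chain rule, and Kullback's second-order expansion $\mathcal{D}=\tfrac12 J\rho^2+\mathcal{O}(\rho^3)$ from conditions \eqref{c1}--\eqref{c3}; for the converse, a first-moment threshold test analyzed by Chebyshev.

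Two small differences are worth flagging. First, the paper's construction is the deterministic time-scaling $\tau_i\mapsto \tau_i/(1-\rho)$ rather than an independently generated $p_1$-renewal release schedule; since $\tau_i/(1-\rho)>\tau_i$, packet $i$ is always in hand when it is due, so the start-up/empty-buffer transient you propose to handle by coupling simply does not arise --- the paper's scheme is strictly simpler here and yields exactly the same $p_1$ inter-release law that your divergence calculation needs. Second, for the converse the paper's detector is the \emph{sum of inter-arrival times} of a fixed number $N'=N-N_c-1$ of observed packets (threshold at $N'\lambda^{-1}+U$), not a departure count over a time window; the two statistics are dual, but working with a fixed number of i.i.d.\ inter-arrivals avoids the randomness of the observation window and makes the Chebyshev step cleaner. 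The paper also takes the KL divergence in the direction $\mathcal{D}(\mathbb{P}_0\|\mathbb{P}_1)$ rather than your $\mathcal{D}(\mathbb{P}_1\|\mathbb{P}_0)$, which is immaterial since both share the same Fisher-information leading term.
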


\begin{proof}
{\it (Achievability)} 

\textbf{Construction:} For a fixed number of packets $N$, Alice scales up the inter-arrival times of the packets by $\frac{1}{1-\rho}$ where $0<\rho<1$, i.e, if she receives the $i^{th}$ packet at $\tau_i$, she sends it at time $\frac{\tau_i}{1-\rho}$, as shown in Fig.~\ref{fig:Streched}. This allows her to transmit at a rate (slightly) lower than the rate she receives packets from Jack and therefore buffer packets. 

First we show that Alice can buffer $\mathcal{O}\left(\sqrt{N}\right)$ packets, and then we demonstrate the covertness.

\textit{Analysis}: ({\em Number of Buffered Packets}) Assume Alice sets 
\begin{align}
\label{rhoVal} \rho&= \frac{\epsilon}{\sqrt{c N}}
\end{align}
\noindent where $0<\epsilon<1$ and $c>0$ is a constant defined later. Then, she delays each packet by $\frac{1}{1-\rho}$ until time $\tau_N$ when she receives the $N^{th}$ packet. Since Alice sends the $i^{th}$ packet at $\frac{\tau_i}{1-\rho}$, we can observe that Alice sends the $i^{th}$ packet if and only if $\frac{\tau_i}{1-\rho}\leq \tau_N$. Therefore, the total number of packets that Alice transmits is $X\left(\tau_N \left(1-\rho\right)\right)$ and the total number of packets that Alice buffers is 
 \begin{figure}
 \begin{center}
\includegraphics[width=\textwidth/2,height=\textheight,keepaspectratio]{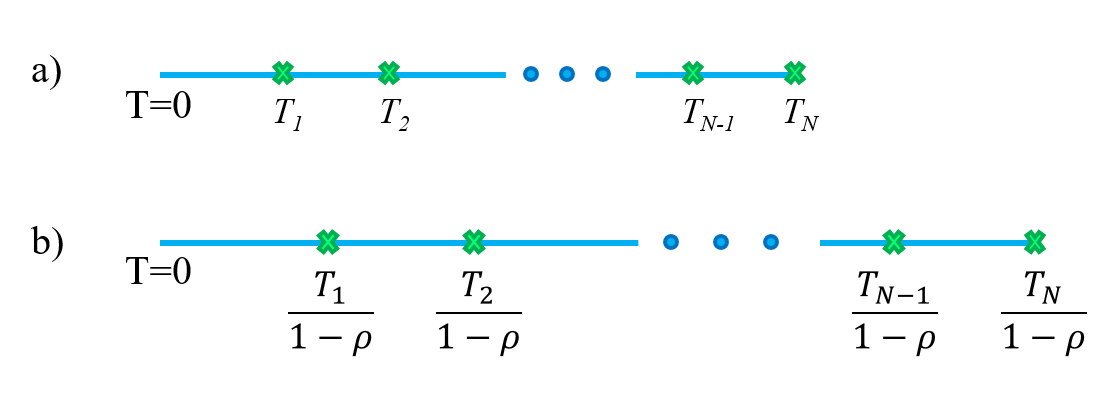}
 \end{center}
  \caption{a) Alice's received process b) The stretched version of Alice's received process when Alice uses a factor $\frac{1}{1-\rho}$.}
  \label{fig:Streched}
  \end{figure}
\begin{align}
\label{eq:0} N_c= N-X\left(\tau_N\left(1-\rho\right)\right)
\end{align}
\noindent We can show that (derived in the Appendix)
\begin{align}
\label{eq:5} \lim_{N \to \infty} \mathbb{P}\left(N_c \geq \epsilon \sqrt{\frac{N}{ 4c}}\right) = 1
\end{align}
\noindent Therefore, Alice can collect $\mathcal{O}\left(\sqrt{N}\right)$ packets in a packet stream of length $N$. 

 ({\em Covertness}) Now, we show that Alice's buffering is covert. We assume Willie knows the total number of packets that Alice has possibly collected, $N_c$, and the scaling factor that Alice has used for such a collection, $1-\rho$. Upon observing the first $N-N_c$ packets, Willie decides whether Alice has not done anything over the channel ($H_0$), or she has slowed down $N$ packets to buffer $N_c$ packets ($H_1$). If he applies an optimal hypothesis test that minimizes $\mathbb{P}_{FA}+\mathbb{P}_{MD}$ on the inter-arrival times, then \cite{bash13squarerootjsac}
 \begin{align} \label{eq:th10e00011} \mathbb{P}_{FA}+\mathbb{P}_{MD} \geq 1-     \sqrt{\frac{1}{2}  \mathcal{D}(\mathbb{P}_0 || \mathbb{P}_1)} \; 
 \end{align}
 \noindent where $\mathbb{P}_0$ and $\mathbb{P}_1$ are joint pdfs of the inter-arrival times when $H_0$ and $H_1$ are true respectively. Next, we show how Alice can lower bound the sum of average error probabilities by upper bounding $ \sqrt{\frac{1}{2} \mathcal{D}(\mathbb{P}_0 || \mathbb{P}_1)} $. Since inter-arrival times are i.i.d, 
 \begin{align}
 \mathbb{P}_0&=\prod_{i=1}^{N-N_c-1} p_0(x_i)\\
 \mathbb{P}_1&=\prod_{i=1}^{N-N_c-1} p_1(x_i,\rho)
 \end{align}
 \noindent where $p_0(x)$ and $p_1(x,\rho)$ are pdfs of a single inter-arrival time under $H_0$ and $H_1$ respectively. Therefore, 
 \begin{align}
 \label{lem2:1}  \mathcal{D}(\mathbb{P}_0 || \mathbb{P}_1) = (N-N_c-1)   \mathcal{D}\left({p}_0(x) || {p}_1(x,\rho)\right)
 \end{align}
 \noindent Note that $p_1(x,\rho)=\left(1- \rho\right)p_0\left(\left(1- \rho\right) x\right)$ represents the family of pdfs that are scaled version of $p_0(x)$. Since the regulatory conditions (\ref{c1}-\ref{c3}) hold, \cite[Ch. 2.6]{kullback1968information}
 \begin{align}
   \label{dklkhaf}\mathcal{D}\left({p}_0(x) || {p}_1(x)\right) =  \frac{c\rho^2}{2} +  \mathcal{O}\left(\rho^3\right) \text{ as } \rho \to 0
 \end{align}
\noindent where the constant $c>0$ is (derived in the Appendix)
\begin{align}
\label{c4}  c=-1+ \int_{x=0}^{\infty}  p_0(x) {x^2} \left(\frac{d \log p_0(x)}{dx}\right)^2 dx
\end{align}
 \noindent Thus
 \begin{align}
\label{lem2:2} \mathcal{D}(\mathbb{P}_0 || \mathbb{P}_1) &= (N-N_c-1) \left(\frac{c \rho^2}{2} +  \mathcal{O}\left(\rho^3\right) \right) \leq N \left(\frac{c \rho^2}{2} +  \mathcal{O}\left(\rho^3\right) \right) \text{ as } \rho \to 0
 \end{align}
 \noindent Hence, by \eqref{rhoVal}
 \begin{align}
 \nonumber  \lim_{N\to\infty}\sqrt{\frac{1}{2} \mathcal{D}(\mathbb{P}_0 || \mathbb{P}_1)} &\leq\lim_{N\to\infty} \epsilon \sqrt{\frac{N}{2 N}} \leq \epsilon
 \end{align}
 \noindent Thus, by \eqref{eq:th10e00011}, $\mathbb{P}_{FA}+\mathbb{P}_{MD} {\geq} 1-\epsilon $ as $N\to \infty$ and Alice can covertly buffer $\mathcal{O}\left(\sqrt{N}\right)$ packets in a packet stream of length $N$.

{\it (Converse)} Suppose that Willie observes $N-N_c$ packets which have $N-N_c-1$ interval-arrival times and wishes to detect whether Alice has done nothing over the channel ($H_0$) or she has delayed each packet by $\frac{1}{1-\rho}$, where $0<\rho<1$ is random variable, and buffered $N_c$ packets ($H_1$). Note that when $H_0$ is true, the inter-arrival times are samples of $p_0(x)$ and when $H_1$ is true, the inter-arrival times are the samples of $p_1(x,\rho)=\left(1-\rho\right) p_0\left(\left(1-\rho\right) x\right)$. Since Willie knows $p_0(x)$, he knows the expected number of inter-arrival times. Therefore, he calculates the sum of $N-N_c-1$ inter-arrival times $S$ for the observed packets and performs a hypothesis test by setting a threshold $U$ and comparing $S$ with $(N-N_c-1) \lambda^{-1} +U$. If $S<(N-N_c-1)\lambda^{-1}+U$, Willie accepts $H_0$; otherwise, he accepts $H_1$. Let $N'=N-N_c-1$
\begin{align}
\label{eq:PFAupperbound12} \mathbb{P}_{FA} = \mathbb{P}\left(S>N'\lambda^{-1} + U | H_0 \right)=\mathbb{P}\left((S-N') \lambda^{-1} > U | H_0 \right)
&\leq\mathbb{P}\left(|S-N'| \lambda^{-1} > U | H_0 \right)
\end{align}
\noindent When $H_0$ is true, Willie observes a renewal process with rate $\lambda$ and inter-arrival variance of $\sigma^2$; hence,
\begin{align}
\label{eq:th1con12}\mathbb{E}\left[S\left|H_0\right.\right]&=N' \lambda^{-1}\\
\label{eq:th1con22}\mathrm{Var}\left[S\left|H_0\right.\right]&={N' \sigma^2}
\end{align}
\noindent Therefore, applying Chebyshev's inequality on~\eqref{eq:PFAupperbound12} yields $\mathbb{P}_{FA} \leq \frac{N' \lambda^{-1}}{U^2 }$. Therefore, if Willie sets 
\begin{align}
\label{con1u} U=\sqrt{\frac{N' }{ \lambda \alpha}}
\end{align}
\noindent for any $0<\alpha<1$, then
\begin{align}
\mathbb{P}_{FA}\leq \alpha
\end{align}
 Next, we will show that if Alice collects $N_c=\omega\left(\sqrt{N}\right)$ packets, she will be detected by Willie with high probability. When $H_1$ is true, since Willie observes a renewal point process in which the pdf of inter-arrival times are $p_1(x,\rho)=\left(1-\rho\right) p_0\left(\left(1-\rho\right) x\right)$,
\begin{align}
\label{eq:th1con32}\mathbb{E}\left[S\left|H_1\right.\right]&= \frac{N'}{\lambda(1-\rho)} \\
\label{eq:th1con42}\mathrm{Var}\left[S\left|H_1\right.\right]&= \frac{N)  \sigma^2}{\left(1-\rho\right)^2}
\end{align}
\noindent Now, consider $\mathbb{P}_{MD}$. We can show that (derived in the Appendix)
\begin{align} 
\label{lem2pmd} \mathbb{P}_{MD}&\leq  \frac{ \sigma^2 \lambda^2}{  N' {\rho}^2}
\end{align}
\noindent Since Willie knows $\rho = \omega\left(\frac{1}{\sqrt{N}}\right)$,
\begin{align} 
\lim\limits_{N \to \infty} \mathbb{P}_{MD} =0 
\end{align}
\noindent Therefore, Willie can achieve $\mathbb{P}_{MD} < \beta$ for any $0<\beta<1$. Combined with the results for the probability of false alarm above, if Alice collects $N_c=\omega\left({\sqrt{N}}\right)$, Willie can choose a $U=\sqrt{\frac{N'}{\lambda \alpha}}$ to achieve any (small) $\alpha>0$ and $\beta>0$ desired.
\end{proof}

Next, we present and prove the results for Scenario 1 using the results of Lemma \ref{lem:2}.
\begin{thm} \label{th:3} Consider Scenario 1 with conditions (\ref{c1}-\ref{c3}) true. Then, Alice can covertly insert $\mathcal{O}(\sqrt{N})$ packets in a packet stream of length $N$. 
\end{thm}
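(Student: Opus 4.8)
The plan is to realize the two-phase scheme sketched above and to combine the covertness bound of Lemma~\ref{lem:2} with a buffer non-depletion argument. Partition Jack's $N$ packets as $N_1 = (1-\delta)N$ for Phase~1 and $N_2 = \delta N$ for Phase~2, where $\delta \in (0,1)$ is a small constant to be fixed in terms of $\epsilon$ (and of the inter-arrival variance $\sigma^2$). In Phase~1 Alice runs exactly the construction of Lemma~\ref{lem:2} on Jack's first $N_1$ packets with $\rho_1 = \Theta(\epsilon/\sqrt{N})$, so that by \eqref{eq:5} she covertly buffers $N_{c,1} \geq \epsilon\sqrt{N_1/(4c)} = \Theta(\epsilon\sqrt{N})$ overt packets with probability tending to $1$. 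In Phase~2 Alice generates an independent renewal process $B$ with inter-arrival density $p_1(x,\rho_2) = \frac{1}{1-\rho_2}\,p_0\!\left(\frac{x}{1-\rho_2}\right)$, $\rho_2 = \Theta(1/\sqrt{N})$, and at each point of $B$ independently transmits a covert packet with probability $q$ and otherwise releases a buffered overt packet, choosing $q \geq \rho_2$ (so the overt buffer has non-negative drift) with $q = \Theta(1/\sqrt{N})$. Since $B$ produces $\Theta(N_2) = \Theta(N)$ points during Phase~2, a Chebyshev bound on the Binomial count of covert choices shows Alice inserts $N_c = q\cdot\Theta(N) = \Theta(\sqrt{N})$ covert packets with probability tending to $1$; here the hidden constant may depend on $\epsilon$ and $\delta$, which is permissible.

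For covertness I would argue as follows. Because Willie cannot authenticate packets, the timing sequence he sees under $H_1$ consists of $N_1 - N_{c,1} - 1$ inter-arrival times drawn i.i.d.\ from the Phase-1 slowed density $p_1(\cdot,\rho_1) = (1-\rho_1)p_0((1-\rho_1)\cdot)$ followed by the i.i.d.\ inter-arrival times of $B$ (density $p_1(\cdot,\rho_2)$), whereas under $H_0$ all inter-arrival times are i.i.d.\ $\sim p_0$. Granting Willie the worst-case knowledge of $N_1, N_2, \rho_1, \rho_2, q$ and the phase boundary, additivity of KL divergence over the two independent blocks, together with \eqref{dklkhaf} and its analogue for the Phase-2 scale family (whose leading KL term is again $\tfrac{c}{2}\rho^2 + \mathcal{O}(\rho^3)$ by \cite[Ch.~2.6]{kullback1968information}), gives $\mathcal{D}(\mathbb{P}_0\|\mathbb{P}_1) = \Theta(N\rho_1^2) + \Theta(N\rho_2^2)$, which is at most $2\epsilon^2$ once the constants in $\rho_1,\rho_2$ are taken small enough; then \eqref{eq:th10e00011} yields $\mathbb{E}[\mathbb{P}_{FA}+\mathbb{P}_{MD}] \geq 1-\epsilon$ \emph{on the event} that Alice's overt buffer never empties during Phase~2. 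On the complementary event her released timings depart from $B$, so I would absorb that event's probability $\mathbb{P}(\mathrm{empty})$ as a total-variation slack, concluding $\mathbb{P}_{FA}+\mathbb{P}_{MD} \geq 1 - \epsilon - \mathbb{P}(\mathrm{empty})$.

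The main obstacle is therefore controlling $\mathbb{P}(\mathrm{empty})$, the probability that Alice runs out of overt packets during Phase~2. I would model the overt buffer content, indexed by the $B$-events of Phase~2, as a random walk started at $N_{c,1} = \Theta(\epsilon\sqrt{N})$ whose per-step increment is the number of Jack arrivals in that inter-event interval minus the indicator of an overt release; since $q \geq \rho_2$ this increment has non-negative mean and a variance $v = \Theta(1)$ (finite under the assumed regularity on $p_0$ and $\sigma^2 < \infty$). Over the $\Theta(N_2) = \Theta(\delta N)$ steps of Phase~2, Kolmogorov's (or Doob's $L^2$) maximal inequality bounds the probability that the walk ever falls below $0$ by $\mathcal{O}\!\big(v\,\delta N/(\epsilon\sqrt{N})^2\big) = \mathcal{O}(v\delta/\epsilon^2)$, which is made smaller than any prescribed bound by choosing the Phase-2 fraction $\delta$ small relative to $\epsilon^2$. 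The delicacy is exactly this joint calibration: the buffer head start is only $\Theta(\epsilon\sqrt{N})$ — a larger constant would violate Phase-1 covertness via Lemma~\ref{lem:2} — so Phase~2 must be kept short enough that its $\Theta(\sqrt{\delta N})$ fluctuations cannot exhaust that head start, while $N_c = q\,\Theta(\delta N)$ remains $\Theta(\sqrt{N})$ and both KL contributions stay below the covertness budget. Once $\rho_1,\rho_2,q,\delta$ are fixed so that the covertness slack $\epsilon$ and the depletion slack are each below half the target, the theorem follows.
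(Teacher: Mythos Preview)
Your approach is essentially the paper's: the same two-phase construction (Lemma~\ref{lem:2} buffering on a long first phase, then a scaled-renewal process with Bernoulli thinning on a short second phase), the same KL-additivity covertness bound over the two blocks, and the same random-walk reduction for buffer non-depletion, with your Kolmogorov/Doob maximal inequality playing the role of the paper's reflection-principle/erf bound and your choice of a small second-phase fraction $\delta$ corresponding to the paper's choice of $\psi$ close to~$1$. One caveat that applies equally to both arguments: because Jack's process is a general renewal (not Poisson), the per-$B$-interval increments you describe are \emph{not} independent---the residual life of Jack's process at each $B$-event carries memory---and the buffer process is not obviously a (sub/super)martingale, so neither Kolmogorov's nor Doob's inequality applies directly; the paper itself flags exactly this difficulty and proceeds by approximating with a simple symmetric random walk (with a footnote promising a rigorous treatment later), so your proposal is at the same level of rigor as the paper on this point.
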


\begin{proof}
{\it (Achievability)} 

\textbf{Construction:} For a fixed number of packets $N$, Alice's transmission includes two phases: a buffering phase and a transmission phase. During the buffering phase of length $ \tau_{ \psi N}$, where $0<\psi<1$ is a parameter to be defined later, Alice scales the inter-arrival times of the first $\psi N$ packets of Jack's transmitted stream to build up packets in her buffer. Based on the results of Lemma~\ref{lem:2}, she buffers $\mathcal{O}\left(\sqrt{N }\right)$ packets. In the second phase of length $\tau_N-\tau_{\psi N}$, Alice scales $p_0(x)$ to 
\begin{align}
\label{p1th3} p_1(x,\rho)=\frac{p_0\left({x}/\left(1-\rho\right)\right)}{1-\rho}
\end{align}
\noindent where $0<\rho<1$, then generates inter-arrival times according to $p_1(x,\rho)$ that represents a renewal process entitled ``Overt-Covert Process''with a rate $\lambda_{oc}$ which is higher than Jack's transmission rate. According to this Overt-Covert Process, Alice starts sending both overt and covert packets. But, to decide when she should send a covert or overt packet, she uses a Bernoulli splitting (p-thinning) procedure, i.e., each time she wants to send a packet, first she generates a random variable according to a Bernoulli distribution with 
\begin{align}
\label{bernulli1}\mathbb{P}\left(\text{Success}\right)=\rho
\end{align}
\noindent If she observes ``Success'', she sends a covert packet, otherwise, she sends an overt packet. 

\textit{Analysis} ({\em Number of Packets}) The rate at which Alice transmits packets in the Overt-Covert Process is 
\begin{align}
\label{rate:1} \lambda_{oc} &= \left(\int_{x=0}^{\infty}x p_1(x,\rho)dx\right)^{-1}=  \left(\int_{x=0}^{\infty} \frac{x p_0\left(x /\left(1-\rho\right) \right)}{1-\rho}dx\right)^{-1}\\
\label{rate:2} &=  \left((1-\rho)\int_{x=0}^{\infty} {x p_0\left(x  \right)}dx\right)^{-1}\\
\label{rate:3} &=  (1-\rho)^{-1}\lambda
\end{align}
\noindent where~\eqref{rate:2} follows from~\eqref{rate:1} by change of variable. Denote the total number of overt and covert packets that Alice transmits in the second phase by $N_{oc}$. Since Alice sends a stream of overt and covet packets in which the locations of covert packets are chosen according to Bernoulli random variables, the total number of covert packets that Alice inserts is
\begin{align}
\label{Nc} N_c = \sum\limits_{i=1}^{N_{oc}} b_i
\end{align}
\noindent where each $b_i$ is a Bernoulli random variable with 
\begin{align}
\label{pbi} \mathbb{P}(b_i=1)=\rho
\end{align}
\noindent Similar to the arguments that leads to~\eqref{eq:5} we can show that
\begin{align}
\label{NocN1} \lim_{N \to \infty}\mathbb{P}\left(N_{oc} \geq \frac{N \left(1-\psi\right)}{2}\right) &=1\\
\label{Ncasymp}\lim_{N \to \infty}\mathbb{P}\left(N_{c} \geq \frac{\rho N_{oc}}{2}\right)
&=1
\end{align}
\noindent Note that for any two events $\mathcal{E}_1$ and $\mathcal{E}_2$, if $\mathbb{P}\left(\mathcal{E}_1\right)=1$ and $\mathbb{P}\left(\mathcal{E}_2\right)=1$, then $\mathbb{P} (\overline{\mathcal{E}_1} \cup \overline{\mathcal{E}_2}) \leq  \mathbb{P} (\overline{\mathcal{E}_1} )  + \mathbb{P} ( \overline{\mathcal{E}_2}) =0$ and therefore, $\mathbb{P} (\overline{\mathcal{E}_1} \cup \overline{\mathcal{E}_2}) =0$ and $\mathbb{P} ({\mathcal{E}_1} \cap {\mathcal{E}_2}) =1$. Now, if we let $\mathcal{E}_1 = \{N_{oc} \geq \frac{N  \left(1-\psi\right)}{2\left(1-\rho\right)}\}$ and $\mathcal{E}_2 = \{N_{c} \geq \frac{\rho N_{oc}}{2}\}$, then
\begin{align}
\nonumber \lim_{N \to \infty}\mathbb{P}\left(N_{c} \geq \frac{\rho N \left(1-\psi\right)}{4}\right) &=1
\end{align}
\noindent Now, if Alice sets
\begin{align}
\label{th3rhoVal} \rho&=  \frac{\epsilon}{\sqrt{{2 c N (1-\psi)}}}
\end{align}
\noindent then
\begin{align}
\label{th3N_c} \lim_{N \to \infty}\mathbb{P}\left(N_{c} \geq \frac{ \epsilon }{4} \sqrt{\frac{N (1-\psi)}{2c}}\right) &=1
\end{align}
\noindent Thus, Alice can insert $\mathcal{O}\left(\sqrt{N}\right)$ packets in a packet stream of length $N$. 

({\em Covertness}) Assume Willie knows Alice's transmission scheme and parameters as well as the time she starts and ends the first and second phase. He also knows the number of covert packets that Alice has possibly inserted into the channel, $N_c$. In the first phase, she receives a packet stream of length $\psi N$. Therefore, by the results of Lemma~\ref{lem:2}, Alice buffers $m=\mathcal{O}\left(\sqrt{N}\right)$ packets where
\begin{align}
\label{th3m} \lim\limits_{N \to \infty} \mathbb{P}\left(m \geq \sqrt{\frac{N \psi}{4c}}\right)=1 
\end{align}
packets while lower bounding Willie's sum of error probabilities $\mathbb{P}_{FA}+\mathbb{P}_{MD}$ by $1-\epsilon$. Therefore, her buffering is covert. 

In the second phase, Alice inserts $N_c$ covert packets in a packet stream of $N_{oc}$ overt and covert packets. Willie, upon observing inter-arrival times of $N_{oc}$ packets, decides whether Alice has not done anything over the channel and therefore the inter-arrival times of the packets are governed by pdf $p_0(x)$, ($H_0$), or she has inserted $N_c$ covert packets along with $N_{oc}-N_c$ overt packets and therefore the inter-arrival times of the packets are governed by $p_1(x,\rho)=\frac{p_0(x/(1-\rho))}{1-\rho}$, ($H_1$). If Willie applies an optimal hypothesis test that minimizes $\mathbb{P}_{FA}+\mathbb{P}_{MD}$ on the inter-arrival times, then \cite{bash13squarerootjsac}
\begin{align} \label{eq:th30e000111} \mathbb{P}_{FA}+\mathbb{P}_{MD} \geq 1-     \sqrt{\frac{1}{2}  \mathcal{D}\left(\mathbb{P}_0 || \mathbb{P}_1\right)} \; 
\end{align}
\noindent where $\mathbb{P}_0$ and $\mathbb{P}_1$ are joint pdfs of the inter-arrival times when $H_0$ and $H_1$ are true respectively. Next, we show how Alice can lower bound the sum of average error probabilities by upper bounding $ \sqrt{\frac{1}{2} \mathcal{D}(\mathbb{P}_0 || \mathbb{P}_1)} $. Since inter-arrival times are i.i.d, 
 \begin{align}
 \mathbb{P}_0&=\prod_{i=1}^{N_{oc}-1} p_0(x_i)\\
 \mathbb{P}_1&=\prod_{i=1}^{N_{oc}-1} p_1(x_i,\rho)
 \end{align}
 \noindent Thus,
 \begin{align}
\mathcal{D}(\mathbb{P}_0 || \mathbb{P}_1) = (N_{oc}-1)   \mathcal{D}\left({p}_0(x) || p_1(x,\rho)\right)
\label{th3:1} & \leq  N_{oc}   \mathcal{D}\left({p}_0(x) || p_1(x,\rho)\right)
 \end{align}
\noindent We can easily see that when the conditions (\ref{c1}-\ref{c3}) hold for $p_1(x,\rho) = (1-\rho) p(((1-\rho))x)$ in Lemma~\ref{lem:2}, they hold for $p_1(x,\rho)=\frac{p_0\left(\left(1-\rho\right)x\right)}{1-\rho}$ as well. Therefore~\cite[Ch. 2.6]{kullback1968information}
\begin{align}
\label{dklkhaf2}\mathcal{D}\left({p}_0(x) || {p}_1(x)\right) &= \frac{c \rho^2}{2} +  \mathcal{O}\left(\rho^3\right) \text{ as } \rho \to 0
\end{align}
\noindent Note that the constant $c$ is the same as the one in~\eqref{c4} (the proof follows the lines of proof of~\eqref{c4} in the Appendix with minor modifications). Thus, we can show that (proved in the Appendix)
\begin{align}
\label{th3e6}\lim\limits_{N \to \infty} \mathbb{E} \left[\sqrt{\frac{\mathcal{D}(\mathbb{P}_0 || \mathbb{P}_1)}{2}}\right] &< \epsilon
\end{align}
  \noindent Consequently, by \eqref{eq:th30e000111}, $\mathbb{E}[\mathbb{P}_{FA}+\mathbb{P}_{MD}] > 1-\epsilon $ as $N\to \infty$. Combined with the results of the covertness in the first phase, Alice can covertly insert $\mathcal{O}\left(\sqrt{N}\right)$ packets in a packet stream of length $N$.

({\em Failure Analysis}) In the second phase, Alice avoids a ``failure'' event, in which she cannot send an overt packet from her buffer because she has run out of packets. Next, we show that Alice can choose $\psi$ such that she achieves $\mathbb{P}_f< \zeta$ for any $ \zeta> 0$, where $\mathbb{P}_f$ is the probability of the event ``failure''. 

Since Bob removes all of the covert packets from the channel and transmits only overt packets to Steve, Alice's overt packet transmission rate is the same as Bob's transmission which is
\begin{align}
  \lambda_o &= \lambda_{oc} \left(1-\mathbb{P}\left(\text{Success}\right)\right)
\label{lamovert}  =\frac{\lambda}{ \left(1-\rho\right)} \left(1-\rho\right) = \lambda
\end{align}
\noindent Thus, Alice's transmits overt packets at the same rate she receives overt packets from Jack. Therefore, similar to Scenario 2, we can analyze the ``failure'' event by modeling the receiving and transmitting of overt packets by a random walk problem which has the same probability of moving from location $z$ to $z+1$ or $z-1$. In Scenario 2, because the timing of the received and transmitted packets are modeled by a Poisson point process, which has the memoryless property, the random walk has the equal probability of moving $z+1$ or $z-1$, and these probabilities does not depend on the $z$. However, in Theorem~\ref{th:3}, this property does not hold, and thus the random walk is not a regular random walk. Here, we approximate this random walk with a regular random walk~\footnote{The accurate analysis will be provided in future versions}. 

Note that $N - {N \psi}= N (1-\psi)$ is the number of packets that Alice receives in the second phase and let the random variable 
\begin{align}
\label{valNo} N_o=N_{oc}-N_c
\end{align}
\noindent be the total number of overt packets that Alice transmits in the second phase. Therefore, 
\begin{align}
\label{valk} K=N(1-\psi)+N_s
\end{align}
\noindent is the total number of received and transmitted overt packets during the second phase. By the law of total probability we can show that
\begin{align}
\label{eq:th2031}  \mathbb{P}_f \leq   \mathbb{P}\left(\mathcal{F} | \mathcal{E}_1 \cap \mathcal{E}_2 \right) +\mathbb{P}\left( \overline{\mathcal{E}_1 
}\right) + \mathbb{P}\left( \overline{\mathcal{E}_2 
}\right)
\end{align}
\noindent where $\mathcal{F}$ is the ``failure'' event, $\mathcal{E}_1 = \Big\{m \geq \epsilon \sqrt{ \frac{N\psi}{4c}}\Big\}$, $\mathcal{E}_2 = \{K\geq 4 N (1-\psi)\}$, and the total number of received and transmitted overt packets packets during the second phase is
\begin{align}
\label{th3K} K=N(1-\psi)+N_o
\end{align}
\noindent By~\eqref{th3m}. 
\begin{align}
\label{eq:th2041} \lim_{N \to \infty}\mathbb{P}\left(\overline{\mathcal{E}_1}\right)=0
\end{align}
Also, we can easily show that (derived in the Appendix)
\begin{align}
\label{randw1} \lim_{N \to \infty}\mathbb{P}\left(\overline{\mathcal{E}_2}\right) = \lim\limits_{N \to \infty}  \mathbb{P}\left(K\geq 4N\left(1-\psi\right)\right) = 0
\end{align}
Now, consider $ \mathbb{P}\left(\mathcal{F}| \mathcal{E}_1 \cap\mathcal{E}_2\right)$. Similar to the arguments that leads to\cite[Eq. 27]{soltani2015covert}, we can show that if $m'= \epsilon \sqrt{ \frac{N\psi}{4c}} $, $k'= 4 N \left(1-\psi\right)$, then
\begin{align}
  \lim\limits_{N \to \infty} \mathbb{P}\left(\mathcal{F}| \mathcal{E}_1 \cap\mathcal{E}_2\right) \nonumber \leq 1- \lim\limits_{k'\to \infty} \mathbb{P}\left(\mathcal{F}| \mathcal{E}_1 \cap\mathcal{E}_2\right)
\leq 1-\operatorname{erf}\left(\frac{\epsilon \sqrt{  \psi N} }{ \sqrt{32 N c \left(1-\psi\right)}}\right)
\leq  1-\operatorname{erf}\left({\epsilon  }\sqrt{\frac{\psi}{32 c\left(1- \psi\right)}}\right)
\end{align}
\noindent Therefore, if $\psi$ is chosen such that: 
\begin{align}
\nonumber \frac{ \psi}{1- \psi} =\left( \frac{\sqrt{32 c} }{\epsilon} \operatorname{erf}^{-1} \left(1-{\zeta}\right)\right)^2 
\end{align}
\noindent then, 
\begin{align}
\label{eq:th3.r2}  \lim\limits_{N\to\infty} \mathbb{P}\left(\mathcal{F}| \mathcal{E}_1 \cap\mathcal{E}_2\right) \leq \zeta
\end{align}
\noindent Therefore, by~\eqref{eq:th2031},~\eqref{eq:th2041},~\eqref{randw1}, and~\eqref{eq:th3.r2}, Alice can achieve $\mathbb{P}_f < \zeta$ for any $0<\zeta<1$. 

In the calculation of number of inserted packets as well as the covertness analysis, we have proved the results given ``failure'' does not occur. Since $\mathbb{P}_f < \zeta$ for any $0<\zeta<1$, the results hold for the general case where the ``failure'' may occur.

Next, we provide the converse for the proposed theorem, given Alice uses the two phased scenario discussed above. The generalization of that will be provided with the final submission of the paper.
\end{proof}
\begin{thm}
Conversely, if Alice attempts to insert $\omega\left(\sqrt{N}\right)$ packets in a packet stream of length $N$, there exists a detector that Willie can use to detect her with arbitrarily low sum of error probabilities $\mathbb{P}_{FA} + \mathbb{P}_{MD}$.
\end{thm}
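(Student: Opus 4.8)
The plan is to follow the converse of Lemma~\ref{lem:2} almost verbatim, the only change being that Willie now bases his test on the second (overt--covert) phase, where inserting packets \emph{speeds up} the renewal process he observes rather than slowing it down. Since Alice is assumed to use the two-phase construction of Theorem~\ref{th:3} and Willie knows all its parameters and phase boundaries, I would first note that the first phase slows down only $m=\mathcal{O}(\sqrt{N})$ packets, which is precisely the regime shown covert in Lemma~\ref{lem:2}, so Willie gains nothing there; instead, let him observe a deterministic number $n^{*}=\lfloor(1-\psi)N\rfloor=\Theta(N)$ of packets transmitted during the second phase (say its first $n^{*}$), and hence $n'=n^{*}-1$ inter-arrival times $a_1,\dots,a_{n'}$.

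Under $H_0$ those times are i.i.d.\ $\sim p_0(x)$ with mean $\lambda^{-1}$ and variance $\sigma^2$; under $H_1$ they are i.i.d.\ $\sim p_1(x,\rho)=p_0\!\left(x/(1-\rho)\right)/(1-\rho)$, i.e.\ distributed as $(1-\rho)$ times a $p_0$-variate, hence with mean $(1-\rho)\lambda^{-1}$ and variance $(1-\rho)^2\sigma^2$. Willie forms $S=\sum_{i=1}^{n'}a_i$ and declares $H_1$ iff $S<n'\lambda^{-1}-U$. Exactly as in \eqref{eq:PFAupperbound12}--\eqref{con1u}, $\mathbb{E}[S\mid H_0]=n'\lambda^{-1}$ and $\mathrm{Var}[S\mid H_0]=n'\sigma^2$, so Chebyshev's inequality gives $\mathbb{P}_{FA}\le n'\sigma^2/U^2$, and taking $U=\sqrt{n'\sigma^2/\alpha}$ yields $\mathbb{P}_{FA}\le\alpha$ for any $\alpha\in(0,1)$.

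For the missed detection, $\mathbb{E}[S\mid H_1]=n'\lambda^{-1}-n'\rho\lambda^{-1}$, so $\{S\ge n'\lambda^{-1}-U\}$ forces $S-\mathbb{E}[S\mid H_1]\ge n'\rho\lambda^{-1}-U$, and Chebyshev gives $\mathbb{P}_{MD}\le (1-\rho)^2 n'\sigma^2/(n'\rho\lambda^{-1}-U)^2$ whenever $n'\rho\lambda^{-1}>U$. It remains to turn ``Alice inserts $\omega(\sqrt{N})$ packets'' into a lower bound on $\rho$: by \eqref{Nc}--\eqref{pbi}, $N_c$ is a sum of $N_{oc}$ i.i.d.\ $\mathrm{Bernoulli}(\rho)$ variables, and the number of transmitted overt packets is $N_{oc}-N_c=(1-\psi)N+m=\Theta(N)$; hence $N_c=\omega(\sqrt{N})$ and the concentration of $N_c/N_{oc}$ around $\rho$ force $\rho=\omega(1/\sqrt{N})$ (if $\rho$ is bounded away from $1$ then $N_{oc}=\Theta(N)$ and $\rho N_{oc}=\omega(\sqrt N)$; otherwise $\rho=\omega(1/\sqrt N)$ trivially). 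Then $n'\rho\lambda^{-1}=\omega(\sqrt{n'})$ dominates $U=\Theta(\sqrt{n'})$, so $n'\rho\lambda^{-1}-U=(1-o(1))\,n'\rho\lambda^{-1}$ and $\mathbb{P}_{MD}\le \sigma^2\lambda^2/\big((1-o(1))^2 n'\rho^2\big)\to 0$ since $n'\rho^2=\Theta(N)\cdot\omega(1/N)=\omega(1)$. Combining the two bounds, $\mathbb{P}_{FA}+\mathbb{P}_{MD}\le\alpha+o(1)$ can be driven below any target, just as at the end of the proof of Lemma~\ref{lem:2}.

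I expect the main obstacle to be the bookkeeping already visible around \eqref{NocN1}--\eqref{th3N_c}: making rigorous that, on the high-probability events that control the buffer occupancy, the phase lengths, and the absence of a ``failure'' in Theorem~\ref{th:3}, $N_{oc}$ concentrates at $\Theta(N)$ and $N_c$ at $\rho N_{oc}$, so that ``$\rho=\omega(1/\sqrt{N})$'' is legitimate; this is dispatched by the same union-bound-over-complements-of-probability-one-events device used there. The genuinely harder problem---a converse valid for \emph{every} insertion strategy rather than only the two-phase construction---is, as the paper itself notes, deferred to a future version.
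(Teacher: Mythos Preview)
Your proposal is correct and follows the same route as the paper: Willie discards phase one, thresholds the sum $S$ of $\Theta(N)$ second-phase inter-arrival times via Chebyshev, and infers $\rho=\omega(1/\sqrt{N})$ from $N_c=\omega(\sqrt{N})$ to force $\mathbb{P}_{MD}\to 0$. Your test direction (declare $H_1$ when $S$ is \emph{small}) is in fact the correct one, since insertion \emph{shortens} the inter-arrivals to mean $(1-\rho)\lambda^{-1}$; the paper's printed proof copies the inequality from the Lemma~\ref{lem:2} converse without flipping it, so your version is the one that actually goes through.
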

\begin{proof}
Suppose Willie knows the total number of packets transmitted by Jack $N$, and the fact that Alice's communication includes two phases as described above. Also, he knows $\psi$, i.e, when the each of the phases starts and end. Willie, wishes to decides whether Alice has done nothing over the channel ($H_0$) or she has performed a two phased scheme on the channel ($H_1$) to transmit covert packets. Also, Willie knows that Alice in which in the first phase she has slowed down the first $N \psi$ packets to buffer some packets, where $0<\psi<1$, and in second phase she has generated a renewal process in which the inter-arrival times are modeled by $p_1(x,\rho)=\frac{1}{1-\rho} p_0\left(\frac{x}{1-\rho}\right)$. Then, according to the renewal process she has transmitted covert $N_c$ packets along with $N-N_c$ overt packets. We assume Willie knows $\psi$ but $0<\rho<1$ is a random variable.

To decide about Alice's communication, Willie disregards the first $N \psi$ packets and only considers the other $N''+1=N (1-\psi)$ packets which have $N''$ interval-arrival times. Since he knows when $H_0$ is true, the inter-arrival times are samples of $p_0(x)$, he knows the expected number of inter-arrival times. Therefore, he calculates the sum of inter-arrival times $S_A$ for the $N''$ selected packets and performs a hypothesis test by setting a threshold $U_A$ and comparing $S_A$ with $N''\lambda^{-1} +U_A$. If $S<N''\lambda^{-1}+U_A$, Willie accepts $H_0$; otherwise, he accepts $H_1$. Consider $\mathbb{P}_{FA}$
\begin{align}
 \mathbb{P}_{FA} = \mathbb{P}\left(S_A>N''\lambda^{-1} + U_A | H_0 \right)  =\mathbb{P}\left(S_A-N'' \lambda^{-1} > U_A | H_0 \right)
\label{eq3:PFAupperbound12}&\leq\mathbb{P}\left(|S_A-N'' \lambda^{-1}| > U_A | H_0 \right)
\end{align}
\noindent When $H_0$ is true, Willie observes a renewal process with rate $\lambda$ and inter-arrival variance of $\sigma^2$; hence,
\begin{align}
\label{eq3:th1con12}\mathbb{E}\left[S_A\left|H_0\right.\right]&=N'' \lambda^{-1}\\
\label{eq3:th1con22}\mathrm{Var}\left[S_A\left|H_0\right.\right]&={N'' \sigma^2}
\end{align}
\noindent Therefore, applying Chebyshev's inequality on~\eqref{eq3:PFAupperbound12} yields $\mathbb{P}_{FA} \leq \frac{N'' \lambda^{-1}}{U_A^2 }$. Therefore, if Willie sets 
\begin{align}
\label{th3con1u} U_A=\sqrt{\frac{N''}{ \lambda \alpha}}
\end{align}
\noindent for any $0<\alpha<1$, then
\begin{align}
\mathbb{P}_{FA}\leq \alpha
\end{align}
 Next, we will show that if Alice inserts $N_c=\omega\left(\sqrt{N}\right)$ packets, she will be detected by Willie with high probability. When $H_1$ is true, since Willie observes a renewal point process in which the pdf of inter-arrival times are $p_1(x,\rho)=\frac{1}{1-\rho} p_0\left(\frac{x}{1-\rho}\right)$,
\begin{align}
\label{eq:th3con32}\mathbb{E}\left[S_A\left|H_1\right.\right]&= \frac{N''(1-\rho)}{\lambda} \\
\label{eq:th3con42}\mathrm{Var}\left[S_A\left|H_1\right.\right]&= {N''(1-\rho)^2 \sigma^2}
\end{align}
\noindent Now, consider $\mathbb{P}_{MD}$. Similar to the arguments that leads to~\eqref{lem2pmd}, we can show that 
\begin{align} 
\label{th3pmd} \mathbb{P}_{MD}&\leq  \frac{ \sigma^2 \lambda^2 \left(1-\rho\right)^2}{N'' {\rho}^2}
\end{align}
\noindent Also, if Alice inserts $N_c=\omega(\sqrt{N})$ packets, then it must be that $\rho= \omega\left(\frac{1}{\sqrt{N}}\right)$. Therefore,
\begin{align} 
\lim\limits_{N \to \infty} \mathbb{P}_{MD}=0
\end{align}
\noindent Therefore, Willie can achieve $\mathbb{P}_{MD} < \beta$ for any $0<\beta<1$. Combined with the results for probability of false alarm above, if Alice inserts $N_c=\omega\left({\sqrt{N}}\right)$, Willie can choose a $U_A=\sqrt{\frac{N''}{\lambda \alpha}}$ to achieve any (small) $\alpha>0$ and $\beta>0$ desired.
\end{proof}
In this scenario, we saw that Alice is allowed to buffer packets transmitted by Jack and release them when it is necessary; thus she is able to alter the timings of the packets. This suggests that Alice can also alter the timings of the packets to send information to Bob (as in Scenario 2) to achieve a higher throughput for sending covert information. However, this would require Alice and Bob to share a secret key (unknown to adversary Willie) prior to the communication which is not possible in many scenarios. Also, packet insertion works over channels for which sending the information through packet timings does not work, such as complicated channels (e.g. mixed with other flows, then separated) which change the timings of the packets significantly and channels with zero capacity when packet timing approaches are employed (e.g. deterministic queues). If we assume Alice and Bob can share a codebook and altering of timings in the channel can be modeled by a queue, we can consider sending information via packet timing, which is discussed in the next scenario.
\subsection{General Renewal Model, Packet Timing (Scenario 2)}
In this section, we consider Scenario 2: In a Non-Poisson channel, Willie can authenticate packets to determine whether or not they were generated by the legitimate transmitter Jack. Therefore, Alice cannot insert packets into the channel; rather, we assume that Alice is able to buffer packets and release them when she desires; hence, she can encode information in the inter-packet delays by using a secret codebook shared with Bob. 

Here, similar to the Poisson case, each of Alice's codewords will consist of a sequence of inter-packet delays to be employed to convey the corresponding message. Also, Alice will employ a two-phase system. In the first phase, she will (slightly) slow down the transmission of packets from Jack to Steve so as to build up a backlog of packets in her buffer. Then, during the codeword transmission phase, she will release packets from her buffer with the inter-packet delays prescribed by the codeword corresponding to the message, while continuing to buffer arriving packets from Jack. To see how much Alice can slow down the packet stream from Jack to Steve without it being detected by warden Willie, we use the results of the Lemma \ref{lem:2}. Next, we propose the capacity of $G/M/1$ in Lemma \ref{lem:3}. Then, we calculate the number of packets that Alice should accumulate in her buffer by the start of the second phase so as to, with high probability, have a packet in her buffer at all of the times required by the codeword. Finally, we consider the throughput of Alice's communication in Theorem \ref{th4}.

Consider the $G/M/1$ queue defined in Section \ref{sec:1}. We propose and prove the upper bound on the its capacity in the following Lemma.

\begin{lem} \label{lem:3}
The $G/M/1$ queue with service rate $\mu$ and input rate $\lambda<\mu$ satisfies
\begin{align} \label{q:00}
C(\lambda) \geq \lambda \log{\frac{\mu}{\lambda}}-\lambda  \mathcal{D}\left(p_0\left(x\right)||e_{\lambda}\left(x\right)\right)
\end{align}
\noindent where $e_{\lambda}(x)=\lambda e^{-\lambda x}$.
\end{lem}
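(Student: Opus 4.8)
The plan is to prove~\eqref{q:00} by a random‑coding argument with renewal (rather than Poisson) codewords, following the template of Anantharam and Verd\'u~\cite{verdubitsq}, and then to evaluate the resulting information rate. Concretely, fix a target $R'$ (nats per packet), draw $M_n=\lfloor e^{nR'}\rfloor$ codewords, each an i.i.d.\ vector $\{a_i\}_{i=1}^{n}$ with common density $p_0$, so that every codeword drives the $G/M/1$ queue with a rate‑$\lambda$ renewal arrival process. Since $\lambda<\mu$ the queue is positive recurrent; in equilibrium the departure rate is $\lambda$ and the $n$‑th departure occurs on average at $n/\lambda+O(1)$, which meets the timing requirement of Definition~\ref{def:v1} as $n\to\infty$ (the constant sojourn offset is absorbed exactly as in~\cite[Theorem~6]{verdubitsq}). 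With a Feinstein/threshold decoder based on the information density $i(a^{n};d^{n})=\log\frac{d\mathbb{P}_{A^nD^n}}{d(\mathbb{P}_{A^n}\otimes\mathbb{P}_{D^n})}$, the information‑stability of this stationary ergodic channel yields that any $R'<\liminf_{n}\frac1n I(A^n;D^n)$ is $\delta$‑achievable for every $\delta\in(0,1)$, so $C(\lambda)\ge\lambda\liminf_{n}\frac1n I(A^n;D^n)$ and everything reduces to lower bounding $\frac1n I(A^n;D^n)$.

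Write $h(\cdot)$ for differential entropy and set $e_\mu(x)=\mu e^{-\mu x}$. I would split $I(A^n;D^n)=h(D^n)-h(D^n\mid A^n)$ and evaluate the conditional term exactly. By the Lindley recursion (cf.\ Section~\ref{defs}) $D_i=W_i+S_i$ with $W_i=\big(\sum_{j=1}^{i}A_j-\sum_{j=1}^{i-1}D_j\big)^{+}$, so given $A^n$ and $D^{i-1}$ the quantity $W_i$ is deterministic and $D_i$ equals that constant plus the service time $S_i$, an $\mathrm{Exp}(\mu)$ variable independent of $(A^n,D^{i-1})$ and of the equilibrium prehistory (this last point is where the ``in equilibrium'' convention of Definition~\ref{def:v1} enters, as in~\cite[Theorem~6]{verdubitsq}). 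The chain rule then gives $h(D^n\mid A^n)=\sum_{i=1}^{n}h(D_i\mid D^{i-1},A^n)=n\,h(e_\mu)+O(1)=n\log(e/\mu)+O(1)$, so $\frac1n I(A^n;D^n)=\frac1n h(D^n)-\log(e/\mu)+o(1)$.

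The crux is the complementary bound $\liminf_{n}\frac1n h(D^n)\ge h(p_0)$, i.e.\ that the arrival epochs are recoverable from the observed departure epochs up to the unavoidable $\mathrm{Exp}(\mu)$ service ambiguity, $h(A^n\mid D^n)\le n\log(e/\mu)+o(n)$. The natural handle is $\sum_{j\le i}D_j=\sum_{j\le i}A_j+R_i$, where the sojourn time $R_i$ obeys its own recursion $R_i=(R_{i-1}-A_i)^{+}+S_i$: conditioning on the $R_i$'s makes $D^n$ a deterministic shift of $A^n$, so $h(D^n)\ge h(D^n\mid\mathcal{R})=h(A^n\mid\mathcal{R})$ for a suitable $\mathcal{R}$. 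The obstacle I expect to dominate the work is that conditioning on the \emph{whole} sequence $R^n$ is too wasteful --- one checks $I(A^n;R^n)$ grows linearly in $n$, which would only give $\frac1n h(D^n)\ge h(p_0)-c$ with $c>0$. I would instead condition only on the sojourn times at regeneration epochs (packets that find the server idle), which have positive density: inside each busy cycle the departures are back‑to‑back services while the cycle's first arrival epoch is pinned by the regeneration condition, and a renewal‑reward/coupling argument over busy cycles should show the information discarded per cycle is $O(1)$, hence the slack is $o(n)$. (A fallback would be a time‑reversal duality for the $G/M/1$ queue, under which the reversed departure stream is the arrival stream of a related queue, moving the estimate to a setting where it is transparent.) In any form this step should parallel the ``typical input/output set'' volume estimates in the achievability half of~\cite{verdubitsq}, re‑done for a renewal input.

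It then remains to identify the constant. Combining the two previous displays, $C(\lambda)\ge\lambda\big(h(p_0)-\log(e/\mu)\big)$, and I would finish with two elementary identities: $h(e_\mu)=\log(e/\mu)$, and --- using $\int_{0}^{\infty}x\,p_0(x)\,dx=1/\lambda$ --- $\mathcal{D}\big(p_0\,\|\,e_\lambda\big)=-h(p_0)-\mathbb{E}_{p_0}\!\left[\log e_\lambda(X)\right]=\log(e/\lambda)-h(p_0)$. Substituting, $\lambda\big(h(p_0)-\log(e/\mu)\big)=\lambda\log(\mu/\lambda)-\lambda\,\mathcal{D}\big(p_0\,\|\,e_\lambda\big)$, which is precisely~\eqref{q:00}. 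I expect essentially all of the difficulty to sit in the output‑entropy lower bound of the third paragraph; the reduction and the $h(D^n\mid A^n)$ computation are routine given~\cite{verdubitsq}.
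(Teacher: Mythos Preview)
Your route is genuinely different from the paper's, and the place you flag as ``where all the difficulty sits'' is exactly where the proposal has a real gap. The paper never tries to bound $h(D^n)$. Instead it works with the Verd\'u--Han information density and performs a change of measure to the $M/M/1$ queue: writing $\mathbb{Q}$ for the $G/M/1$ law and $\mathbb{P}$ for the $M/M/1$ law with the same $(\lambda,\mu)$, one has the algebraic identity
\[
\log\frac{\mathbb{Q}_{D^n|A^n}}{\mathbb{Q}_{D^n}}
\;=\;\log\frac{\mathbb{P}_{D^n|A^n}}{\mathbb{P}_{D^n}}
\;+\;\log\frac{\mathbb{Q}_{A^n|D^n}}{\mathbb{P}_{A^n|D^n}}
\;+\;\log\frac{\mathbb{P}_{A^n}}{\mathbb{Q}_{A^n}}.
\]
The first term on the right is the $M/M/1$ information density, which converges in probability to $\log(\mu/\lambda)$ by the computations already in~\cite{verdubitsq}; the third is an i.i.d.\ average and converges to $-\mathcal{D}(p_0\|e_\lambda)$ by the law of large numbers; and the middle term is dispatched by the one-line change-of-measure bound $\mathbb{P}\bigl(\tfrac{1}{n}\log\tfrac{\mathbb{Q}_{A^n|D^n}}{\mathbb{P}_{A^n|D^n}}<-\gamma\bigr)\le e^{-n\gamma}$. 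No structural analysis of the $G/M/1$ output process is needed, and the liminf-in-probability statement falls out directly with no appeal to information stability.

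Your decomposition $I=h(D^n)-h(D^n\mid A^n)$ and the evaluation $h(D^n\mid A^n)=n\log(e/\mu)+O(1)$ are fine, as is the final algebra, but everything then rests on $\liminf_n\tfrac{1}{n}h(D^n)\ge h(p_0)$, equivalently $h(A^n\mid D^n)\le n\log(e/\mu)+o(n)$. You correctly observe that conditioning on the full sojourn sequence $R^n$ discards $I(A^n;R^n)=\Theta(n)$ and hence fails; the proposed repair --- condition only at regeneration epochs and argue $O(1)$ loss per busy cycle --- is plausible but is itself a nontrivial lemma that you have not established (inside a busy period the interdepartures are simply the service times, and recovering the arrival pattern to within an $\mathrm{Exp}(\mu)$ ambiguity per packet is not obvious for general $p_0$; your own mention of a ``fallback'' reflects this). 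A second, smaller gap: you pass from a Feinstein decoder on the information density to $C(\lambda)\ge\lambda\liminf_n\tfrac{1}{n}I(A^n;D^n)$ by invoking ``information stability,'' but stability is asserted rather than proved; the paper's Verd\'u--Han route avoids this entirely by working in probability throughout. In short, the paper's change-of-measure trick is precisely what sidesteps the output-entropy estimate your approach makes central.
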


\begin{proof}
{\it (Achievability)} 

\textbf{Construction:} We assume that at the time of transmission, the queue is in equilibrium. We start the transmission by sending the first packet, and we consider the time that this packet arrives at the queue is time zero. Then, using the shared codebook between the transmitter and the receiver, the transmitter encodes the message into $n$ inter-packet delays $A^n=\left(A_1,\cdots,A_n\right)$; i.e, $A_i$ is the time elapsed between $i^{th}$ and $(i+1)^{th}$ packet for $1\leq i \leq n$. We denote the time that the first packet (called packet zero in \cite{verdubitsq}) spends in the queue by $D_0$, and let $D^n=\left(D_0,D_1,\cdots,D_n\right)$, where $D_i$ is the inter-departure time between the $i^{th}$ and $(i+1)^{th}$ packet for $1\leq i \leq n$.

For the $G/M/1$ queue, we denote the joint pdf of the inter-arrival times by $\mathbb{Q}_{A^n}\left(a^n\right)$, joint pdf of the vector $D^n$ by $\mathbb{Q}_{D^n}\left(d^n\right)$, joint pdf of $A^n$ and $D^n$ by $\mathbb{Q}_{A^n,D^n}\left(a^n,d^n\right)$, conditional pdf of $A^n$ given $D^n$ by $\mathbb{Q}_{A^n|D^n}\left(a^n|d^n\right)$ and conditional pdf of $D^n$ given $A^n$ by $\mathbb{Q}_{D^n|A^n}\left(d^n|a^n\right)$.

For a special case in which the inter-arrival times are modeled by exponential random variables with exponential pdf $e_{\lambda}\left(x\right)$, the $G/M/1$ queue is a $M/M/1$ queue. For this queue, we denote all of the above joint pdfs by the letter $\mathbb{P}$ instead of $\mathbb{Q}$.
 
\textit{Analysis}: We can obtain the capacity of the queue by \cite{verdu1994general}:
\begin{align}
\label{q:cap}C(\lambda)=\lambda \sup\limits_{A^n}\,  \mathbb{\underline{I}}\left(A^n;D^n\right) 
\end{align}
\noindent where \cite{han1993approximation} 
\begin{align}
\label{q:cap2}\mathbb{\underline{I}}\left(A^n;D^n\right)=\sup\bigg\{\alpha\in \mathbb{R}  :\mathbb{P} \left(\frac{1}{n} i_{A^n;D^n}\left(a^n;d^n\right) \leq \alpha \right)\bigg\}
\end{align}
\noindent is the \textit{liminf in probability} of the sequence of normalized information densities
\begin{align}
\frac{1}{n} i_{A^n;D^n}\left(a^n;d^n\right) = \frac{1}{n}\log \frac{\mathbb{Q}_{D^n|A^n}\left(d^n|a^n\right)}{\mathbb{Q}_{D^n}\left(d^n\right)}
\end{align}
Comparing \eqref{q:cap} with the formula for capacity in \cite{verdu1994general}, we see an extra $\lambda$ in the right hand side (RHS) of \eqref{q:cap} that is due to differences between the definitions of capacity in \cite{verdu1994general} and here.
To show that \eqref{q:00} is true, by \eqref{q:cap},\eqref{q:cap2}, it is enough to show that there exists a sequence of random variables $A_1,A_2,\cdots$ such that
\begin{align}
\sup\bigg\{\alpha\in \mathbb{R}  :\mathbb{P} \left(\frac{1}{n} i_{A^n;D^n}\left(a^n;d^n\right) \leq \alpha \right)\bigg\}\geq
\label{eq:th4.1} &\log{\frac{\mu}{\lambda}}-  \mathcal{D}\left(\mathbb{Q}_A\left(x\right)||e_{\lambda}\left(x\right)\right)
\end{align}
\noindent To establish~\eqref{eq:th4.1}, it is sufficient to show that there exists a sequence of random variables $A_1,A_2,\cdots$ such that for every $\gamma>0$
\begin{align}
\label{q:0} \lim\limits_{n\to \infty}\mathbb{P}\left[\frac{1}{n} \log \frac{\mathbb{Q}_{D^n|A^n}\left(d^n|a^n\right)}{\mathbb{Q}_{D^n}\left(d^n\right)}    \log{\frac{\mu}{\lambda}}- \mathcal{D}\left(\mathbb{Q}_A(x)||e_{\lambda}(x)\right)-\gamma\right] = 0
\end{align}
We can easily prove that (derived in the Appendix)
\begin{align}
\frac{1}{n}\log \frac{\mathbb{Q}_{D^n|A^n}\left(d^n|a^n\right)}{\mathbb{Q}_{D^n}\left(d^n\right)}&= \frac{1}{n} \log{\frac{\mathbb{P}_{D^n|A^n}\left(d^n|a^n\right)}{\mathbb{P}_{D^n}\left(d^n\right)}}+\frac{1}{n} \log{\frac{\mathbb{Q}_{A^n|D^n}\left(a^n|d^n\right)}{\mathbb{P}_{A^n|D^n}\left(a^n|d^n\right)}
}
 \label{q:10}\phantom{=}+\frac{1}{n}\log{\frac{\mathbb{P}_{A^n}\left(a^n\right)}{\mathbb{Q}_{A^n}\left(a^n\right)}}
\end{align}
 Note that in the above equation, the pdfs denoted by letter $\mathbb{P}$ are related to $M/M/1$ queue, but the arguments in the above equation are the random variables related to the $G/M/1$ queue. 

Consider the three terms on the right hand side of \eqref{q:10}. We can show that for all $\gamma>0$ (proved in the Appendix)
\begin{align}
&\label{q:6}  \lim\limits_{n \to \infty}\mathbb{P}\left({\frac{1}{n} \log{\frac{\mathbb{P}_{D^n|A^n}\left(d^n|a^n\right)}{\mathbb{P}_{D^n}\left(d^n\right)}} < \log{\frac{\mu}{\lambda}} -\gamma/3 }\right)=0\\
&\label{q:7} \lim\limits_{n \to \infty} \mathbb{P}\left(\frac{1}{n} \log{\frac{\mathbb{Q}_{A^n|D^n}\left(a^n|d^n\right)}{\mathbb{P}_{A^n|D^n}\left(a^n|d^n\right)}} < -\gamma/3\right)=0\\
 \label{q:8} &  \lim\limits_{n \to \infty}\mathbb{P} \left( \frac{1}{n}\log{\frac{\mathbb{P}_{A^n}\left(a^n\right)}{\mathbb{Q}_{A^n}\left(a^n\right)}} 
 <- \mathcal{D}\left(\mathbb{Q}_{A}\left(x\right)||e_{\lambda}\left(x\right)\right) - \gamma/3 \right)=0
\end{align}
\noindent Therefore, \eqref{q:10}-\eqref{q:8} yield \eqref{q:0} and the proof is complete. 
\end{proof}

\begin{thm} \label{th4} Consider Scenario 4 with conditions (\ref{c1}-\ref{c3}) true and
\begin{align} \label{q:01}
\lambda \log{\frac{\mu}{\lambda}}\lambda  -\mathcal{D}\left(p_0\left(x\right)||e_{\lambda}\left(x\right)\right) > 0
\end{align}
\noindent where $e_\lambda(x)=\lambda e^{-\lambda x}$. By embedding information in the inter-packet delays, Alice can covertly and reliably transmit $\mathcal{O}\left(\lambda N \right)$ bits to Bob in a packet stream of length $N$.
\end{thm}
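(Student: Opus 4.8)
The plan is to prove Theorem~\ref{th4} by the same two-phase construction used for Theorem~\ref{th:3}, with the buffering phase controlled by Lemma~\ref{lem:2} and the transmission phase controlled by Lemma~\ref{lem:3}; the only structural difference is that in the second phase Alice now releases the packets of a codeword drawn from a capacity-approaching codebook for the $G/M/1$ queue rather than performing Bernoulli splitting. I would fix $0<\epsilon<1$, $0<\zeta<1$, $0<\delta<1$ at the outset and keep as free design parameters the buffering fraction $\psi\in(0,1)$, the phase-one stretch parameter $\rho=\Theta(1/\sqrt N)$, and the codeword length $n$.

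In \emph{Phase 1} Alice stretches the inter-arrival times of the first $\psi N$ of Jack's packets by $1/(1-\rho)$; taking $\rho$ of order $1/\sqrt N$, Lemma~\ref{lem:2} shows that with probability tending to one she buffers $m=\Theta(\sqrt N)$ packets, and that this phase is covert, since the $\Theta(\psi N)$ inter-arrival times Willie then sees have joint divergence $\mathcal D(\mathbb P_0\|\mathbb P_1)=O(1)$, made small enough by the choice $\rho=\Theta(1/\sqrt N)$ that $\sqrt{\tfrac12\mathcal D(\mathbb P_0\|\mathbb P_1)}<\epsilon$ in~\eqref{eq:th10e00011}. In \emph{Phase 2} the key point is that the achievability proof of Lemma~\ref{lem:3} draws the codeword symbols i.i.d.\ from $\mathbb Q_A(x)=p_0(x)$, and this is exactly the choice that (a) produces the divergence penalty $\mathcal D(p_0\|e_\lambda)$ in~\eqref{q:00}, which by hypothesis~\eqref{q:01} still leaves $C(\lambda)>0$, and (b) makes the law of the packet stream Alice emits in Phase 2 identical to that of overt traffic. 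Hence, by Definition~\ref{def:rate}, for every $\gamma>0$ there is a sequence of $(n,M,n/\lambda,\delta)$-codes with $\lambda(\log M)/n>C(\lambda)-\gamma$; choosing $n=\lfloor(1-\psi)N\rfloor$ yields $\log M\ge\tfrac{n}{\lambda}\big(C(\lambda)-\gamma\big)=\Theta(N)$, i.e.\ $\mathcal O(\lambda N)$ covert bits, once covertness and reliability are established. During Phase 2 Alice keeps enqueueing Jack's arrivals while releasing the $n$ codeword packets with the prescribed inter-departure times; a routine concentration estimate shows the number of Jack packets arriving during the (random, $\approx n/\lambda$-long) second phase is $n(1+o(1))$, so the overall $N$-packet budget is respected with high probability.

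Covertness of the combined scheme would then follow by noting that under $H_0$ every inter-arrival time Willie observes is i.i.d.\ $p_0$, while under $H_1$ the Phase-2 symbols are, marginally over the secret codeword, also i.i.d.\ $p_0$ and independent of Phase 1, so up to a negligible boundary term they add nothing to the relative entropy between the hypotheses; thus $\mathcal D(\mathbb P_0\|\mathbb P_1)$ for the full observation equals its Phase-1 value and $\mathbb E[\mathbb P_{FA}+\mathbb P_{MD}]\ge1-\epsilon$. Reliability would be obtained by a union bound: a codeword fails only if the decoder of the chosen code errs (probability $\le\delta$) or Alice's buffer empties before the codeword finishes (the ``failure'' event $\mathcal F$). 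For $\mathcal F$ I would reuse the random-walk argument from the failure analysis of Theorem~\ref{th:3}: since Alice releases overt packets at rate $\lambda$, the buffer occupancy performs a (nearly) balanced walk on $\{0,1,2,\dots\}$ started at $m=\Theta(\sqrt{N\psi})$ and run for $K=\Theta(N(1-\psi))$ arrival/departure steps, so on the high-probability events $\{m\ge\epsilon\sqrt{N\psi/(4c)}\}$ and $\{K\le 4N(1-\psi)\}$ one gets $\mathbb P(\mathcal F)\le 1-\operatorname{erf}\big(\epsilon\sqrt{\psi/(32c(1-\psi))}\big)$, which is pushed below $\zeta/2$ by taking $\tfrac{\psi}{1-\psi}$ large enough; combining with the decoder error and the $o(1)$ probabilities of the ``bad'' concentration events gives total failure probability $\le\zeta$ for $N$ large, so the scheme is reliable and covert and carries $\mathcal O(\lambda N)$ bits.

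The step I expect to be the main obstacle is the control of $\mathcal F$: the buffer occupancy is \emph{not} a memoryless random walk here, because both Jack's arrival process and the codeword release process are general renewal processes, so the reflection-principle $\operatorname{erf}$ estimate borrowed from the Poisson construction of~\cite{soltani2015covert} is only heuristic (exactly as already flagged for Theorem~\ref{th:3}). A rigorous version seems to need a functional-CLT / Skorokhod-reflection argument for the difference of two independent renewal counting processes, or a Lundberg-type supermartingale bound, in order to control $\mathbb P\big(\inf_{t}\mathrm{Buf}(t)\le 0\big)$ given the $\Theta(\sqrt N)$ head start. A second, milder issue is reconciling the ``queue in equilibrium'' hypothesis of Lemma~\ref{lem:3} with the $O(1/\sqrt N)$ dip in the queue's input rate during Phase 1; this should follow from continuity of the $G/M/1$ stationary law in the input rate, but is not spelled out here.
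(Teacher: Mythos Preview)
Your proposal is correct and follows essentially the same route as the paper: the same two-phase construction, Lemma~\ref{lem:2} for the buffering phase, the observation that Phase-2 codewords drawn i.i.d.\ from $p_0$ are statistically indistinguishable from overt traffic, Lemma~\ref{lem:3} for the positive rate, and the random-walk failure analysis borrowed verbatim from Theorem~\ref{th:3}. Your caveat that the $\operatorname{erf}$ bound on $\mathbb P(\mathcal F)$ is only heuristic for non-Poisson renewals exactly matches the paper's own footnote, and your equilibrium concern is handled in the paper only by the one-line remark that the input rate stays below $\mu$ in both phases.
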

\begin{proof} {\it (Achievability)} 

\textbf{Construction:} To establish covert communication over the timing channel, Alice and Bob share a secret key (codebook) to which Willie does not have access. To build a codebook, a set of $M$ independently generated codewords $\{C(J_i)\}_{i=1}^{i=M}$ are generated for messages $\{J_i\}_{i=1}^{i=M}$ according to realizations of a renewal process with inter-arrival pdf $p_0(x)$ that mimics the overt traffic on the channel between Jack and Steve, where $M$ is the size of the codebook. In particular, to generate a codeword $C(J_i)$, the renewal process of the packets transmitted by Jack is simulated, i.e., $C(J_i)$ consists of inter-arrival times $A_1, \cdots,A_{N(1-\psi)}$ that generated according to the pdf $p_0(x)$. For each message transmission, Alice uses a new codebook to encode the message into a codeword. According to the codebook, each message corresponds to a codeword that is a series of inter-packet delays. Alice starts the transmission of the codeword by sending the first packet and then applies the inter-packet delays to the packets that are being transmitted from Jack to Steve (see Fig. \ref{fig:codebook}). On the other hand, Bob knows when to start reading the inter-packet delays and decode them based on the shared codebook.

\begin{figure}
\begin{center}
\includegraphics[width=\textwidth/2,height=\textheight,keepaspectratio]{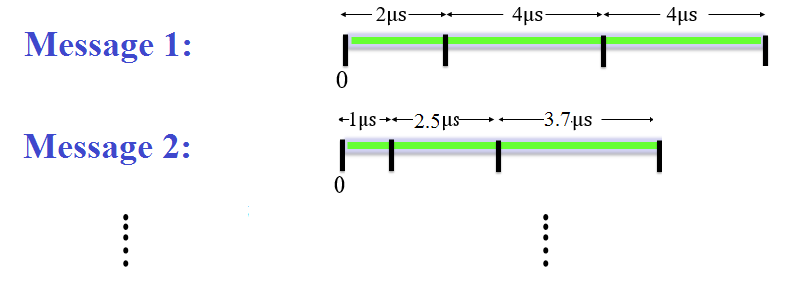}
\end{center}
 \caption{Codebook generation: Alice and Bob share a codebook (secret), which specifies the sequence of inter-packet delays corresponding to each message. Each letter of the codebook is obtained by generating a random variable according to $p_0(x)$.}
\label{fig:codebook}
 \end{figure}

Per above, Alice's communication includes two phases: a buffering phase and a transmission phase. During the buffering phase $[0,\tau_{N \psi}]$, where $0< \psi<1$ is a parameter to be defined later, Alice slows down the packet transmission in order to build up packets in her buffer. In particular, Alice's purpose in the first phase is to buffer enough packets to ensure, with high probability, she will not run out of packets during the transmission phase $(\tau_{N \psi},\tau_{N}]$ (see Fig.~\ref{fig:Twophased2}).

\begin{figure}
\begin{center}
\includegraphics[width=\textwidth/2,height=\textheight,keepaspectratio]{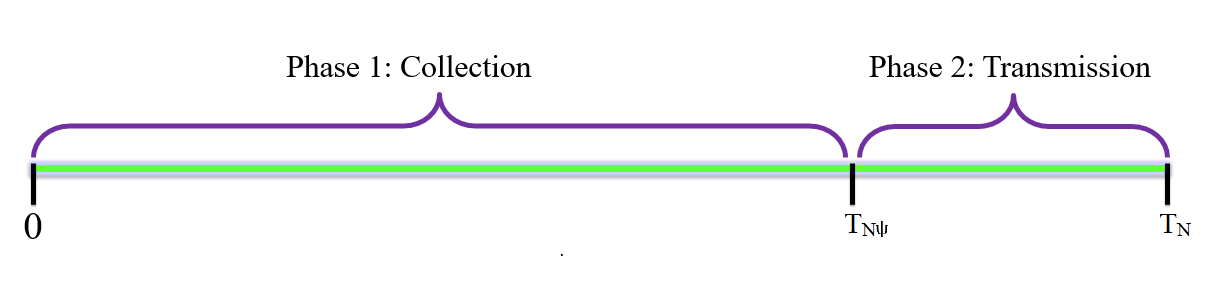}
\end{center}
 \caption{Two-phase construction: Alice's communication includes two phases. In the first phase, Alice slows down the transmission and buffers the excess packets. In the next phase, she transmits packets to Bob according to the inter-packet delays in the codeword corresponding to the message to be transmitted.}
 \label{fig:Twophased2}
 \end{figure}
 
\textit{Analysis}: Since the queue is initially in equilibrium and in both of the phases, Alice's packet transmission rate remains less than the service rate of the queue, the queue stays in equilibrium during the scenario. Thus, we can use the results of Lemma~\ref{lem:3}. 

({\em Covertness}) Suppose that Willie knows when each of the two phases will start and end if Alice chooses to transmit to Bob. Next, we show that during the first phase, Alice's buffering is covert. By Lemma~\ref{lem:2}, Alice can buffer $m=\mathcal{O}\left(\sqrt{N}\right)$ in the first phase where 
\begin{align}
\label{th4m} \lim\limits_{N \to \infty}\mathbb{P}\left(m \geq \epsilon \sqrt{\frac{ N \psi }{4c}}\right)=1,
\end{align}
\noindent while lower bounding the sum of Willie's error probability by $1-\epsilon$ where $0<\epsilon<1$. Thus, Alice's buffering is covert in this phase. 

During the second phase, the packet timings corresponding to the selected codeword are an instantiation of a renewal point process with inter-arrival pdf $p_0(x)$ and hence the traffic pattern is indistinguishable from the pattern that Willie expects on the link from Jack to Steve. Hence, the scheme is covert.

({\em Reliability}) Next, we show that Alice will have a reliable communication to Bob. The notion of reliability is tied to two events. First, Bob should be able to decode the message with arbitrarily low probability of error. This follows by adopting the proposed coding scheme as well as condition~\eqref{q:01} (see Lemma~\ref{lem:3}). Second, Alice needs to avoid a ``failure'' event, in which Alice is unable to create the packet timings for the selected codeword because she has run out of packets in her buffer at some point in the codeword transmission process. 

In the first phase of Scenario 2, Alice uses the same buffering technique on the same number of packets,$N \psi$ , as in that of Scenario 1. Therefore, in both of the scenarios, she can collect $m=\mathcal{O}\left(\sqrt{N}\right)$ packets in the first phase (see~\eqref{th3m} and~\eqref{th4m}). Also, in the second phase of Scenario 4, the rate at which she receives and transmits overt packets is $\lambda$, which is the same as in Scenario 3. Combined with the fact that the second phase in both of the scenarios starts when Alice receives the $(\psi N+1)^{th}$ packet and ends when Alice receives the $N^{th}$ packet, the failure analysis of Scenario 4 follows from the one in Scenario 3 (Theorem 3) and we can show that Alice can achieve $\mathbb{P}_f<\zeta$ for any $\zeta>0$, as long as $\frac{ \psi}{1- \psi} =\left( \frac{2 }{\epsilon} \operatorname{erf}^{-1} \left(1-\zeta\right)\right)^2$, where $\mathbb{P}_f$ is probability of the event ``failure''.

({\em Number of Covert Bits}) By Lemma~\ref{lem:3}, the capacity of the $G/M/1$ queue for conveying information through inter-packet delays is $C(\lambda)>0$ (nats/packet), where $C(\lambda)$ is defined in~\eqref{q:00}. Therefore, she can transmit covertly and reliably $N_b=C(\lambda) \left(\tau_N-\tau_{\psi N}\right) = C(\lambda) \tau_{N(1-\psi)}$ nats to Bob. Since $\tau_N$ is sum of $N$ i.i.d inter-arrival times, the WLLN yeilds $\frac{\tau_{N }}{N} \xrightarrow{P} \lambda^{-1}$. Therefore
\begin{align}
C(\lambda)\frac{\tau_{N (1-\psi)}}{N(1-\psi)}= \frac{N_b}{N(1-\psi)} \xrightarrow{P} C(\lambda) \lambda^{-1}
\end{align}
\noindent Thus, we can easily show that 
\begin{align}
\lim\limits_{N \to \infty}\mathbb{P}\left(N_b \geq \frac{C(\lambda) N \left(1-\psi\right)}{\lambda}\right)=1
\end{align}
\noindent Hence, Alice can send $N_b=\mathcal{O}\left(N\right)$ bits to Bob covertly and reliably. 

({\em Size of the Codebook}) According to Definition~\ref{def:v1}, the rate of the codebook is $\frac{\lambda \log M }{N (1-\psi)}$ where $M$ is the size of the codebook. Since the capacity of the queue $C(\lambda)$ is the maximum achievable rate at output rate $\lambda$ (see Definition~\ref{def:rate}), the size of the codebook is
\begin{align}
M = e^{{\left(1- \psi\right) N C(\lambda) }}
\end{align}
\noindent where $C(\lambda)$ is defined in~\eqref{q:00} and $1-\psi = \left(\left( \frac{2 }{\epsilon} \operatorname{erf}^{-1} \left(1-{\zeta}\right)\right)^2 +1\right)^{-1}$.

Here, in the covertness analysis, calculation of the number of covert bits, and the size of the codebook, we have proved that the transmission is covert given ``failure'' does not occur. Since $\mathbb{P}_f<\zeta$ for any $\zeta>0$, the results hold for the general case too.
\end{proof}

\section{Discussion}
Although the regulatory conditions \eqref{c1}-\eqref{c3} required for Lemma~\ref{lem:2}, Theorems 3 and 5 seem restrictive, many probability distributions satisfy these conditions. For example, the generalized gamma distribution and its special cases, exponential distribution, Chi-squared distribution, Rayleigh distribution, Weibull distribution, Gamma distribution, and Erlang distribution, satisfy \eqref{c1}-\eqref{c3}. 
Among the distributions that do not satisfy conditions \eqref{c1}-\eqref{c3}, are included any distributions whose support is not $[0,\infty)$, such as the Uniform distribution on $[a,b]$. The intuition is that if Alice slows down the packet stream (which results in scaling up the distribution $p_0(x)$) to buffer packets, for a large number of packets, she produces with high probability an inter-packet delay that for certain could not have been generated by $p_0(x)$. Thus, Willie will detect Alice's buffering with high probability.

\section{Conclusion}
We present two scenarios for covert communication on a general (i.e. not necessarily Poisson) renewal channel, hence significantly extending our previous work. In the first scenario, since the packets are not authenticated by adversary Willie, Alice communicates with Bob by insertion of the packets into the channel. We propose a two-phase scheme for Alice. In the first phase, she slows down the packet stream to buffer some packets. In the second phase, she inserts her own packets along with Jack's transmitted packets in a slightly higher rate renewal process. If the total number of transmitted packets from Jack to Steve is $N$, Alice can covertly insert $\mathcal{O}\left(\sqrt{N}\right)$ packets. Next, we analyze the scenario where Willie authenticates the packets; therefore, Alice cannot insert packets. However, we assume that Alice and Bob share a secret key, allowing them to share a secret codebook, and that the only distortion between Alice and Bob is a stable queue. We showed that if Alice buffers some packets first, she can reliably and covertly send $\mathcal{O}\left(N\right)$ bits to Bob if the total number of packets transmitted by Jack is $N$.

\bibliographystyle{ieeetr}

\appendix
\paragraph{Proof of~\eqref{eq:5}}
Observe that
\begin{align}
  \mathbb{P}\left(N_c \geq \epsilon \sqrt{\frac{N}{4 c}}\right) =  \mathbb{P}\left(N_c \geq \frac{\rho N}{{2}}\right)
 &= \mathbb{P}\left(N-X\left(\tau_N\left(1-\rho\right)\right) \geq  \frac{\rho N}{{2}}\right)\\
  \nonumber &=  \mathbb{P}\left(X\left(\tau_N\left(1-\rho\right)\right) \leq  N\left(1- \frac{\rho}{{2}}\right) \right)\\
    \nonumber &=  \mathbb{P}\left(\tau_N\left(1-\rho\right)\leq \tau_{N\left(1- \frac{\rho }{{2}}\right)}\right)
\end{align}
\noindent where the last step is true since $\mathbb{P}\left( \tau_{i} \leq T\right)= \mathbb{P}\left(N_c \geq i\right)$. Let $A_1, A_2, \cdots$ be the inter-arrivals of the packets transmitted by Jack. Therefore,
\begin{align}
\mathbb{P}\left(N_c \geq \epsilon \sqrt{\frac{N}{ 4 c}}\right) &=  \mathbb{P}\left(\left(1-\rho\right) \sum_{i=1}^{N} A_i \leq \sum_{i=1}^{N (1-\rho/{2})} A_i  \right)\\
 &=  \mathbb{P}\left(\left(1-\rho\right) \sum_{i=N (1-\rho/{2})+1}^{N} A_i \leq \rho \sum_{i=1}^{N (1-\rho/{2})} A_i  \right)\\
 &=  \mathbb{P}\left( \sum_{i=N (1-\rho/{2})+1}^{N} \frac{A_i}{N \rho/2} \leq 2 \frac{N(1-\rho/2)}{N(1-\rho)}  \sum_{i=1}^{N (1-\rho/{2})} \frac{A_i}{N (1-\rho/2)}  \right)
\end{align}
\noindent Let 
\begin{align}
\nonumber A_N^{*} &= \sum_{i=N (1-\rho/{2})+1}^{N} \frac{A_i}{N \rho/2}\\
\nonumber A_N^{**} &=\sum_{i=1}^{N (1-\rho/{2})} \frac{A_i}{N (1-\rho/2)}
\end{align}
\noindent Therefore, 
\begin{align}
\label{apeq1} \mathbb{P}\left(N_c \geq \epsilon \sqrt{\frac{N}{ 4 c}}\right)
 &=  \mathbb{P}\left( A_N^{*}\leq 2 \frac{N(1-\rho/2)}{N(1-\rho)} A_N^{**} \right)\\
 \label{apeq2} &\geq \mathbb{P}\left( A_N^{*}\leq 2 \frac{N(1-\rho)}{N(1-\rho)} A_N^{**} \right)\\
  \label{apeq3}&=\mathbb{P}\left( A_N^{*}\leq 2  A_N^{**} \right)
\end{align}
\noindent where \eqref{apeq2} follows from \eqref{apeq1} since $\{A_N^{*}\leq 2 \frac{N(1-\rho)}{N(1-\rho)} A_N^{**} \}\subset \{A_N^{*}\leq 2 \frac{N(1-\rho/2)}{N(1-\rho)} A_N^{**} \}$. Now, 
by the WLLN, 
\begin{align}
\nonumber A_N^{*} &\xrightarrow{P} \lambda^{-1}\\
\nonumber A_N^{**} &\xrightarrow{P} \lambda^{-1}
\end{align} 
\noindent Since $c_1 A_N^{*} + c_2 A_N^{**} \xrightarrow{P} c_1 \lambda^{-1} + c_2\lambda^{-1}$ for any two real numbers $c_1$ and $c_2$, (see \cite[problem 5 page 262]{shiryaev1996probability}), $2 A_N^{**} - A_N^{*} \xrightarrow{P} \lambda^{-1}$. Therefore, for any $\gamma>0$, $\lim_{N \to \infty }\mathbb{P}\left(|2 A_N^{**} - A_N^{*} - \lambda^{-1}|\leq\gamma\right)=1$. Consequently, 
\begin{align}
\nonumber \lim_{N \to \infty }\mathbb{P}\left(2 A_N^{**} - A_N^{*} - \lambda^{-1}\geq-\gamma\right)=1
\end{align} 
\noindent Let, $\gamma= \lambda^{-1}$. Thus,
\begin{align}
 \label{apeq4} \lim_{N \to \infty }\mathbb{P}\left(2 A_N^{**} - A_N^{*}  \geq 0\right)=1
\end{align} 
\noindent Therefore, by~\eqref{apeq3} and~\eqref{apeq4} 
\begin{align}
 \nonumber &\lim_{N \to \infty} \mathbb{P}\left(N_c \geq \epsilon \sqrt{\frac{N}{ 4 c}}\right)=1
\end{align}
\paragraph{Proof of~\eqref{c4}} This is true according to [Ch. 2.6]\cite{kullback1968information}, and $c$ is the Fisher information which is given by
\begin{align}
\label{eq:ap4}c=\int_{x=0}^{\infty} p_0(x) \frac{1}{p_0(x)^2} \left(\frac{\partial p_1(x,\rho)}{\partial \rho}\bigg|_{\rho=0}\right)^2 dx
\end{align}
\noindent Since $p_1(x,\rho)=(1-\rho)p_0(x(1-\rho))$,
\begin{align}
 \frac{\partial p_1(x,\rho)}{\partial \rho} \bigg|_{\rho=0} &= \frac{\partial \left(\left(1-\rho\right) p_0\left(\left(1-\rho\right)x\right)\right)}{\partial \rho} \bigg|_{\rho=0}
\label{eq:ap6} =-p_0(x) - x \frac{d p_0(x)}{dx}
\end{align}
\noindent Therefore,~\eqref{eq:ap4} becomes
\begin{align}
 c&=\int_{x=0}^{\infty} p_0(x) + 2x \frac{d p_0(x)}{dx} +  \frac{x^2}{p_0(x)} \left(\frac{d p_0(x)}{dx}\right)^2 dx
\label{eq:ap5} =1+ \int_{x=0}^{\infty} 2x \frac{d p_0(x)}{dx} +  \frac{x^2}{p_0(x)} \left(\frac{d p_0(x)}{dx}\right)^2 dx
\end{align}
\noindent Consider $ 2x \frac{d p_0(x)}{dx}$ in the above equation.  By~\eqref{c3}, $\int_{x=0}^{\infty}\frac{\partial p_1(x,\rho)}{\partial \rho}\big|_{\rho=0}dx=0$. Therefore, by~\eqref{eq:ap6}
\begin{align}
\nonumber \int_{x=0}^{\infty}p_0(x) + x \frac{d p_0(x)}{dx} dx  = 0
\end{align}
\noindent Consequently
\begin{align}
\nonumber \int_{x=0}^{\infty}  x \frac{d p_0(x)}{dx} dx  = - \int_{x=0}^{\infty} p_0(x)  dx  =-1
\end{align}
\noindent Thus,~\eqref{eq:ap5} becomes
\begin{align}
\nonumber c&=-1+ \int_{x=0}^{\infty}  \frac{x^2}{p_0(x)} \left(\frac{d p_0(x)}{dx}\right)^2 dx =-1+ \int_{x=0}^{\infty}  p_0(x) {x^2} \left(\frac{d \log p_0(x)}{dx}\right)^2 dx
\end{align}

\paragraph{Proof of~\eqref{lem2pmd}} 
\begin{align}
\nonumber \mathbb{P}_{MD}=\mathbb{P}\left(S\leq N' /\lambda + U \bigg| H_1\right) &=\mathbb{P}\left(S -  \frac{N'}{\lambda(1-\rho)} \leq \frac{N'}{\lambda} + U - \frac{N'}{\lambda(1-\rho)}  \bigg| H_1\right)\\
\nonumber &=\mathbb{P}\left(S -  \frac{N'}{\lambda(1-\rho)} \leq \frac{N'}{\lambda}  \frac{\rho}{\rho-1}+ U   \bigg| H_1\right)\\
\nonumber &=\mathbb{P}\left(-\left(S -  \frac{ N'}{\lambda(1-\rho)} \right) \geq -\left( \frac{N'}{\lambda} \frac{\rho}{\rho-1}+ U  \right) \bigg| H_1\right)\\
\label{eq:PMDupperbound12} &\leq\mathbb{P}\left(\Big|S -  \frac{N' }{\lambda(1-\rho)} \Big| \geq \Big| \frac{N'}{\lambda} \frac{\rho}{\rho-1}+ U  \Big| \bigg| H_1\right)
\end{align}
\noindent Therefore, applying Chebyshev's inequality on~\eqref{eq:PMDupperbound12} yields 
\begin{align} 
 \mathbb{P}_{MD}\leq \frac{N' \frac{ \sigma^2}{\left(1-\rho\right)^2}}{\left( \frac{N'}{\lambda} \frac{\rho}{\rho-1}+ U \right)^2}
= \frac{N'  \sigma^2}{\left(1-\rho\right)^2 \left( \frac{N'}{\lambda}  \frac{\rho}{\rho-1}+ U \right)^2}
&= \frac{N'  \sigma^2}{ \left( \frac{N'}{\lambda} {\rho}+ U \left(1-\rho\right)\right)^2}
\end{align}
\noindent By \eqref{con1u}
\begin{align} 
 \mathbb{P}_{MD}\leq  \frac{N'  \sigma^2}{ \left( N' {\rho}/\lambda + \sqrt{\frac{N' }{\lambda \alpha}} \left(1-\rho\right)\right)^2}
\label {conv2} &= \frac{ \sigma^2}{ \left( \sqrt{N'} {\rho}/\lambda+ {\frac{ 1}{\sqrt{\lambda \alpha}}} \left(1-\rho\right)\right)^2}
\end{align}
\noindent Consider the denominator of~\eqref{conv2}. Since $\sqrt{N'} {\rho}/\lambda>0$ and $\frac{ 1}{\sqrt{\lambda \alpha}}>0$,
\begin{align} 
\nonumber \mathbb{P}_{MD}&\leq  \frac{ \sigma^2}{ \left( \sqrt{N'} {\rho}/\lambda\right)^2}
\end{align}

\paragraph{Proof of~\eqref{th3e6}} 
\noindent By~\eqref{th3:1}, 
\begin{align}
\label{th3e2}\mathbb{E} [\sqrt{\mathcal{D}(\mathbb{P}_0 || \mathbb{P}_1)}] \leq \mathbb{E}[\sqrt{N_{oc}}] \sqrt{\mathcal{D}\left({p}_0(x) || {p}_1(x)\right)}
\end{align}
\noindent where $\mathbb{E}[\cdot]$ denotes expectation over all possible values of the random variable $N_{oc}$. By the Law of Total Expectation
\begin{align}
\mathbb{E}[\sqrt{N_{oc}}] =  
 \nonumber &\mathbb{E}[\sqrt{N_{oc}}|N_{oc} \leq 2N\left(1-\psi\right)] \mathbb{P} \left( N_{oc} \leq 2N\left(1-\psi\right)\right)\\
\nonumber  &\phantom{=}+   \mathbb{E}[\sqrt{N_{oc}}|N_{oc} > 2N\left(1-\psi\right)] \mathbb{P} \left( N_{oc} > 2N\left(1-\psi\right)\right)\\
\label{th3e1} &\leq \mathbb{E}[\sqrt{N_{oc}}|N_{oc} \leq 2N\left(1-\psi\right)] + \mathbb{P} \left( N_{oc} > 2N\left(1-\psi\right)\right)
\end{align}
 \noindent Consider $\mathbb{E}[\sqrt{N_{oc}}|N_{oc} \leq 2N\left(1-\psi\right)]$ in~\eqref{th3e1}. 
 \begin{align}
\label{th3e3} \mathbb{E}[\sqrt{N_{oc}}|N_{oc} \leq 2N\left(1-\psi\right)] \leq \sqrt{2N\left(1-\psi\right)}
 \end{align}
\noindent Now, consider $\mathbb{P} \left( N_{oc} > 2N\left(1-\psi\right)\right)$ in~\eqref{th3e1}. Similar to the arguments that leads to~\eqref{eq:5} and~\eqref{NocN1} we can show that
\begin{align}
\label{NocN2} \lim_{N \to \infty}\mathbb{P}\left(N_{oc} > {2 N \left(1-\psi\right)}\right) &=0
\end{align}
\noindent Hence, by~\eqref{th3e1},~\eqref{th3e3},~\eqref{NocN2}
\begin{align}
\label{th3:4}\lim\limits_{N \to \infty} \mathbb{E}[\sqrt{N_{oc}}] \leq \sqrt{2 N \left(1-\psi\right)}
\end{align}
\noindent Therefore, by~\eqref{th3e2} and~\eqref{th3:4}
\begin{align}
\label{th3e4}\lim\limits_{N \to \infty} \mathbb{E} [\sqrt{\mathcal{D}(\mathbb{P}_0 || \mathbb{P}_1)}] \leq  \sqrt{2 N (1-\psi)} \sqrt{\mathcal{D}\left({p}_0(x) || {p}_1(x)\right)}
\end{align}
\noindent Recall that according to~\eqref{th3rhoVal},  $\rho=  \frac{\epsilon}{\sqrt{{2 c N (1-\psi)}}}$ and therefore, $\rho \to 0$ as $N \to \infty$. Hence, by~\eqref{dklkhaf2} and~\eqref{th3e4}
\begin{align}
\label{th3e5}\lim\limits_{N \to \infty} \mathbb{E} [\sqrt{\mathcal{D}(\mathbb{P}_0 || \mathbb{P}_1)}] \leq  \lim\limits_{N \to \infty} \sqrt{2 N (1-\psi) c \rho^2}=\sqrt{2 N (1-\psi) c \frac{\epsilon^2}{2 c N (1-\psi)}}= \epsilon
\end{align}
\noindent Thus, 
\begin{align}
\label{th3easdf6}\lim\limits_{N \to \infty} \mathbb{E} [\sqrt{\frac{\mathcal{D}(\mathbb{P}_0 || \mathbb{P}_1)}{2}}] &\leq \frac{\epsilon}{\sqrt{2}} < \epsilon
\end{align}

\paragraph{Proof of~\eqref{randw1}} According to~\eqref{th3K}, $K=N_o+N(1-\psi)$. Therefore,
\begin{align}
 \lim\limits_{N \to \infty}  \mathbb{P}\left(K\geq 4 N\left(1-\psi\right)\right) = \lim\limits_{N \to \infty}  \mathbb{P}\left(N_o\geq 3 N\left(1-\psi\right)\right)
\label{firsteq}&= \lim\limits_{N \to \infty}  \mathbb{P}\left(\frac{N_o}{N\left(1-\psi\right)}\geq 3\right)
\end{align}
\noindent Note that there is a symmetry between the total number of overt packets $N_o$ and the total number of covert packets $N_c$ that Alice transmits in the second phase. Observe 
\begin{align}
N_o = \sum\limits_{i=1}^{N_{oc}} (1-b_i)
\end{align}
\noindent Therefore, 
\begin{align}
\label{step1}  \mathbb{P}\left(K\geq 4 N\left(1-\psi\right)\right) &=  \mathbb{P}\left(\sum\limits_{i=1}^{N_{oc}} \frac{1-b_i}{N\left(1-\psi\right)} \geq 3\right)\\
\label{step2} &\leq  \mathbb{P}\left(\sum\limits_{i=1}^{N_{oc}} \frac{1}{N\left(1-\psi\right)} \geq 3\right)\\
\label{step3} &=  \mathbb{P}\left( N_{oc} \geq 3 {N\left(1-\psi\right)}\right)
\end{align}
\noindent where \eqref{step2} follows from \eqref{step1} since each of the $b_i$s corresponds to an outcome of a Bernoulli process therefore $b_i \leq 1$ and consequently $\Big\{\sum\limits_{i=1}^{N_{oc}} \frac{1-b_i}{N\left(1-\psi\right)} \geq 3\Big\} \subset \Big\{\sum\limits_{i=1}^{N_{oc}} \frac{1}{N\left(1-\psi\right)} \geq 3\Big\}$. Now, by~\eqref{NocN2}, $\lim_{N \to \infty}\mathbb{P}\left(N_{oc} > {2 N \left(1-\psi\right)}\right) =0$. Therefore, 

\begin{align}
\label{step4} \lim\limits_{N \to \infty}\mathbb{P}\left( N_{oc} \geq 3 {N\left(1-\psi\right)}\right)=0
\end{align}
\noindent Thus, by \eqref{step3} and \eqref{step4} 
\begin{align}
\nonumber  \lim\limits_{N \to \infty} \mathbb{P}\left(K\geq 4 N\left(1-\psi\right)\right) &=0
\end{align}

\paragraph{Proof of~\eqref{q:10}} 
\begin{align}
 \frac{\mathbb{Q}_{D^n|A^n}\left(d^n|a^n\right)}{\mathbb{Q}_{D^n}\left(d^n\right)}  
\label{q:1} =& \frac{\mathbb{P}_{D^n|A^n}\left(d^n|a^n\right)}{\mathbb{P}_{D^n}\left(d^n\right)} \frac{\mathbb{P}_{D^n}\left(d^n\right)}{\mathbb{Q}_{D^n}\left(d^n\right)} \frac{\mathbb{Q}_{D^n|A^n}\left(d^n|a^n\right)}{\mathbb{P}_{D^n|A^n}\left(d^n|a^n\right)}\\
\label{q:2} =& \frac{\mathbb{P}_{D^n|A^n}\left(d^n|a^n\right)}{\mathbb{P}_{D^n}\left(d^n\right)} \frac{\mathbb{P}_{D^n}\left(d^n\right)}{\mathbb{Q}_{D^n}\left(d^n\right)} \frac{\mathbb{Q}_{D^n,A^n}\left(d^n,a^n\right)}{\mathbb{P}_{D^n,A^n}\left(d^n,a^n\right)}
\frac{\mathbb{P}_{A^n}\left(a^n\right)}{\mathbb{Q}_{A^n}\left(a^n\right)}=\\
\label{q:3} &  \frac{\mathbb{P}_{D^n|A^n}\left(d^n|a^n\right)}{\mathbb{P}_{D^n}\left(d^n\right)}  \frac{\mathbb{Q}_{A^n|D^n}\left(a^n|d^n\right)}{\mathbb{P}_{A^n|D^n}\left(a^n|d^n\right)}
\frac{\mathbb{P}_{A^n}\left(a^n\right)}{\mathbb{Q}_{A^n}\left(a^n\right)}
\end{align}

\noindent where \eqref{q:2} follows from \eqref{q:1} since $\frac{\mathbb{Q}_{D^n|A^n}\left(d^n|a^n\right)}{\mathbb{P}_{D^n|A^n}\left(d^n|a^n\right)} =  \frac{\mathbb{Q}_{D^n,A^n}\left(d^n,a^n\right)}{\mathbb{P}_{D^n,A^n}\left(d^n,a^n\right)}
\frac{\mathbb{P}_{A^n}\left(a^n\right)}{\mathbb{Q}_{A^n}\left(a^n\right)}$ and \eqref{q:3} follows from \eqref{q:2} since $\frac{\mathbb{P}_{D^n}\left(d^n\right)}{\mathbb{Q}_{D^n}\left(d^n\right)} \frac{\mathbb{Q}_{D^n,A^n}\left(d^n,a^n\right)}{\mathbb{P}_{D^n,A^n}\left(d^n,a^n\right)}
 =\frac{\mathbb{Q}_{A^n|D^n}\left(a^n|d^n\right)}{\mathbb{P}_{A^n|D^n}\left(a^n|d^n\right)} $. Therefore
 
 \begin{align}
 \frac{1}{n}\log \frac{\mathbb{Q}_{D^n|A^n}\left(d^n|a^n\right)}{\mathbb{Q}_{D^n}\left(d^n\right)}= \frac{1}{n} \log{\frac{\mathbb{P}_{D^n|A^n}\left(d^n|a^n\right)}{\mathbb{P}_{D^n}\left(d^n\right)}}+\frac{1}{n} \log{\frac{\mathbb{Q}_{A^n|D^n}\left(a^n|d^n\right)}{\mathbb{P}_{A^n|D^n}\left(a^n|d^n\right)}
 }+\frac{1}{n}\log{\frac{\mathbb{P}_{A^n}\left(a^n\right)}{\mathbb{Q}_{A^n}\left(a^n\right)}}
 \end{align}

\paragraph{Proof of~\eqref{q:6}} 
Similar to the arguments that yield \cite[Eq 2.25]{verdubitsq}, we can show that 

\begin{align}
\label{q:16}\frac{1}{n} \log{\frac{\mathbb{P}_{D^n|A^n}\left(d^n|a^n\right)}{\mathbb{P}_{D^n}\left(d^n\right)}}\nonumber &= \log{\frac{\mu}{\lambda}}+\frac{\lambda-\mu}{n}\sum\limits_{i=1}^{n} d_i + \frac{\mu}{n}\sum\limits_{i=1}^{n} w_i-\frac{1}{n} i_{D_0;D_1,\cdots,D_n}\left(d_0;d_1,\cdots,d_n\right)
\end{align}

\noindent where 
\begin{align}
 & i_{D_0;D_1,\cdots,D_n}\left(d_0;d_1,\cdots,d_n\right) = \log{\frac{\mathbb{P}_{D_0|D_1,\cdots,D_n}\left(d_0|d_1,\cdots,d_n\right)}{\mathbb{P}_{D_0}\left(d_0\right)}}
\end{align}

\noindent Consider $\frac{\lambda-\mu}{n}\sum\limits_{i=1}^{n} d_i +\frac{\mu}{n}\sum\limits_{i=1}^{n} w_i$

\begin{align}
\label{q:12}&\frac{\lambda-\mu}{n}\sum\limits_{i=1}^{n} d_i +\frac{\mu}{n}\sum\limits_{i=1}^{n} w_i =\\ \label{q:13}&\frac{\lambda}{n}\sum\limits_{i=1}^{n} d_i -\frac{\mu}{n}\sum\limits_{i=1}^{n} s_i\\
\label{q:14}&\xrightarrow{P}\lambda \frac{1}{\lambda} - \mu \frac{1}{\mu}=0
\end{align}

\noindent where \eqref{q:13} follows from \eqref{q:12} since $w_i=d_i-s_i$ and \eqref{q:14} follows from \eqref{q:13} because of the WLLN and the fact that the output rate and the service rate of the $G/M/1$ queue is $\lambda$ and $\mu$ respectively. Thus, 

\begin{align}
\label{q:15}\frac{\lambda-\mu}{n}\sum\limits_{i=1}^{n} d_i +\frac{\mu}{n}\sum\limits_{i=1}^{n} w_i \xrightarrow{P} 0
\end{align}

Now, consider $i_{D_0;D_1,\cdots,D_n}\left(d_0;d_1,\cdots,d_n\right)$. Similar to the arguments in \cite[Lemma 1]{verdubitsq}, we can show that 
\begin{align}
\label{q:20} i_{D_0;D_1,\cdots,D_n}\left(d_0;d_1,\cdots,d_n\right) \xrightarrow{P} 0
\end{align}

Note that if $Z_n^1\xrightarrow{P} Z^1$, $Z_n^2\xrightarrow{P} Z^2$, where $Z_n^1,Z_n^2$ are sequences of random variables and $Z^1,Z^2$ are random variables, then $Z_n^1 + Z_n^2\xrightarrow{P} Z^1+Z^2$ \cite[problem 5, p 262]{shiryaev1996probability}. Therefore, \eqref{q:16}, \eqref{q:15}, \eqref{q:20} yield 

\begin{align}
\frac{1}{n} \log{\frac{\mathbb{P}_{D^n|A^n}\left(d^n|a^n\right)}{\mathbb{P}_{D^n}\left(d^n\right)}} \xrightarrow{P} \log{\frac{\mu}{\lambda}}
\end{align}

\noindent Consequently,~\eqref{q:6} holds.

\paragraph{Proof of~\eqref{q:7}} 
Similar to the arguments in \cite[p. 13]{verdubitsq}, we can show that

\begin{align}
\nonumber  &\mathbb{P}\left(\frac{1}{n} \log{\frac{\mathbb{Q}_{A^n|D^n}\left(a^n|d^n\right)}{\mathbb{P}_{A^n|D^n}\left(a^n|d^n\right)}} < -\gamma/3\right)\leq e^{-\gamma n/3}
\end{align}

\noindent Therefore, 

\begin{align}
\nonumber \lim\limits_{n \to \infty}  &\mathbb{P}\left(\frac{1}{n} \log{\frac{\mathbb{Q}_{A^n|D^n}\left(a^n|d^n\right)}{\mathbb{P}_{A^n|D^n}\left(a^n|d^n\right)}} < -\gamma/3\right)=0
\end{align}

\paragraph{Proof of~\eqref{q:8}} 
Since the inter-arrival times input processes for a $M/M/1$ queue and a $G/M/1$ queue are both independent and identically distributed,

\begin{align}
\nonumber \frac{1}{n}\log{\frac{\mathbb{P}_{A^n}\left(a^n\right)}{\mathbb{Q}_{A^n}\left(a^n\right)}} &= \frac{1}{n}\log{\frac{\prod_{i=1}^{n}  e_\lambda\left(a_i\right)}{\prod_{i=1}^{n}  p_0\left(a_i\right)}} =\frac{1}{n} \sum_{i=1}^{n} \log{\frac{e_{\lambda}\left(a_i\right)}{p_0\left(a_i\right)}}
\end{align}
 
\noindent Therefore, by the SLLN, we can show that 

\begin{align}
\frac{1}{n}\log{\frac{\mathbb{P}_{A^n}\left(a^n\right)}{\mathbb{Q}_{A^n}\left(a^n\right)}} \nonumber \xrightarrow{P} \mathbb{E}_{\mathbb{Q}_A}\left[\log{\frac{e_{\lambda}\left(x\right)}{p_0\left(x\right)}}\right]= \int\limits_{0}^{\infty} p_0(x) \log{\frac{e_{\lambda}\left(x\right)}{p_0\left(x\right)}} dx
  =- \mathcal{D}\left(p_0\left(x\right)||e_{\lambda}\left(x\right)\right)
\end{align}
\noindent Thus,
\begin{align}
 \frac{1}{n}\log{\frac{\mathbb{P}_{A^n}\left(a^n\right)}{\mathbb{Q}_{A^n}\left(a^n\right)}}  &\xrightarrow{P}- \mathcal{D}\left(p_0\left(x\right)||e_{\lambda}\left(x\right)\right)
\end{align}

\noindent Consequently,~\eqref{q:8} holds.
\end{document}